\begin{document}
\bibliographystyle{acm}
\title{Formal Verification of Self-Assembling Systems}

\author{Aaron Sterling}

\institute{Laboratory for Nanoscale Self-Assembly, Department of Computer Science, Iowa State University. \email{sterling@iastate.edu}}

\maketitle
\begin{abstract}
This paper introduces the theory and practice of formal verification of self-assembling systems.  We interpret a well-studied abstraction of nanomolecular self assembly, the Abstract Tile Assembly Model (aTAM), into Computation Tree Logic (CTL), a temporal logic often used in model checking.  We then consider the class of ``rectilinear'' tile assembly systems.  This class includes most aTAM systems studied in the theoretical literature, and all (algorithmic) DNA tile self-assembling systems that have been realized in laboratories to date.  We present a polynomial-time algorithm that, given a tile assembly system $\mathcal{T}$ as input, either provides a counterexample to $\mathcal{T}$'s rectilinearity or verifies whether $\mathcal{T}$ has a unique terminal assembly.  Using partial order reductions, the verification search space for this algorithm is reduced from exponential size to $\mathcal{O}(n^2)$, where $n \times n$ is the size of the assembly surface.  That reduction is asymptotically the best possible.  We report on experimental results obtained by translating tile assembly simulator files into a Petri net format manipulable by the SMART model checking engines devised by Ciardo \emph{et al.}  The model checker runs in $\mathcal{O}(|\mathcal{T}| \cdot n^4)$ time, where $|\mathcal{T}|$ is the number of tile types in tile assembly system $\mathcal{T}$, and $n \times n$ is the surface size.  Atypical for a model checking problem---in which the practical limit usually is insufficient memory to store the state space---the limit in this case was the amount of memory required to represent the rules of the model. (Storage of the state space and of the reachability graph were small by comparison.)  We discuss how to overcome this obstacle by means of a front end tailored to the characteristics of self-assembly.
\end{abstract}
\newpage
\setcounter{page}{1}
\section{Introduction}
The emerging field of algorithmic nanomolecular self-assembly began in the mid-1990s, when Adleman, Rothemund, Winfree and others demonstrated through both mathematical rigor and experimentation that it was possible to ``program matter'' by designing sets of DNA molecules that would spontaneously combine together into desired shapes~\cite{sticker}.  Perhaps the biggest practical success so far has been the technology of ``DNA origami,'' which is now being used in joint research between IBM and CalTech to build a microchip with transistors placed closer together than ever before~\cite{ibm_caltech_origami}.  Part of the reason this technology is ``ahead of'' other DNA self-assembly technologies is its low error rate, especially compared to ``DNA tile'' self-assembly.  Understanding the behavior of DNA tiles is difficult, even if they bind error-free, and when one considers a tile assembly system designed to perform error-correction, the analysis can be much more complex.  Tools of formal verification have been extremely useful in other areas of computer science, especially when concurrency and nondeterminism can produce unexpected (and undesirable) executions.  To date, however, methods of formal verification have not been applied to self-assembling systems.  Instead, self-assembly research is reminiscent of concurrent-system research in the early 1980s: the best method to verify a construction works is to run it in a simulator multiple times, and watch for any bad behavior.  Of course, this provides no guarantee against rare occurrences of bad behavior, which was one motivation for the introduction of formal methods of verification.  In this paper, we are similarly motivated: we present a theory for the formal verification of DNA tile self-assembly, and we report on initial experiments performed with a model checking tool implementing this theory.

Self-assembly is a process in which small objects, which communicate and connect only with their local neighbors, form global structures.  \emph{Algorithmic} self-assembly studies self-assembly through the ``algorithmic lens,'' and, in particular, considers the design and complexity of self-assembling systems.  While researchers in robotics~\cite{klavinsghrist} and amorphous computing~\cite{nagpal} are active in this area, in this paper, we focus on nanomolecular algorithmic self-assembly, achieved in the lab by building ``tiles'' out of DNA molecules, using techniques pioneered by Seeman~\cite{denovo}.  A powerful mathematical abstraction of molecular behavior is the Abstract Tile Assembly Model (aTAM), due to Winfree~\cite{winfree} and Rothemund~\cite{rothemund}.  There is, at this point, an extensive literature on the aTAM, several variations of it, different complexity measures, and upper and lower bounds to assemble different shapes.

The formalisms in the aTAM include the following: a finite set of distinct types of self-assembling agents, a set of local binding rules that completely determines the behavior of the agents, and an initial configuration of the system. A particular self-assembly ``run'' starts with an operator placing a finite seed assembly on the surface, and then allowing a ``solution'' containing infinitely many of each agent type to mix on the surface.  Agents bind nondeterministically to the seed assembly, and to the growing configuration, consistent with the local rules.  In the tile assembly models we consider in this paper, each agent is a four-sided tile, and the assembly surface is the first quadrant of the two-dimensional integer plane.

In general, a \emph{tile assembly system} (TAS) defined in the aTAM may have infinitely-long execution paths, and there is research into the assembly of infinite shapes such as fractals~\cite{ssadst}.  Nevertheless, proposed applications, and laboratory experiments, have focused on the correct construction of finite, bounded structures.  We make use of this to design a theory of formal verification for the aTAM: we take as input both a tile assembly system and a bound on the assembly surface, and then perform model checking on the behavior of that assembly system on that surface.  This guarantees that the set of legally reachable configurations (\emph{i.e.}, the transition system) is finite, so it is amenable to well-understood model-checking algorithms.  With this bound on surface size, we interpret the aTAM into CTL (Computation Tree Logic, a popular model checking formalism~\cite{CTL_origin}).  The question, ``Does tile assembly system $\mathcal{T}$ have a terminal assembly bounded by $n \times n$?'' then reduces to a model checking problem for CTL.  We present the reduction of the aTAM to CTL in Section~\ref{section:aTAM-CTL}.

A fixed CTL model checking problem is decidable in time linear in the size of the system it is checking.  The size of the systems relevant here are determined by the number of possible legal configurations (reachable state space) and the number of legal transitions between those configurations (number of edges).  Since all assembly surfaces are a finite subset of $\mathbb{Z} \times \mathbb{Z}$, each possible legal configuration is isomorphic to a connected grid graph.  A transition to another configuration takes place when one tile is added to the current assembly.  Hence, for large $n$, the number of potential transitions (edges) is sparse, due to the sparseness of edges of the underlying grid graph.  The parameter that dominates is the size of the state space, which---even for tile assembly systems that are very ``nice''---can grow exponentially.  Therefore, it is essential to reduce the size of space we need to search, if the model checking problem for the aTAM is to be tractable.

We achieve this search space reduction for a significant class of tile assembly systems: the ``rectilinear'' TASes.  We define the class later, but it comprises most of the TASes studied in the theoretical literature; and, perhaps more significantly, it also includes all DNA tile assembly systems produced in laboratories to date, except for systems that were intentionally random (so not formally verifiable), or which consisted of essentially just one tile that bound to itself again and again (a context in which formal verification is not relevant).  We show that rectilinear TASes are sufficiently well behaved that it is easy to verify online whether an input TAS is part of such a class, and, if so, to reduce the search space from exponential in size to $\mathcal{O}(n^2)$, because most of the tiles added are guaranteed to be independent of other tile additions.  Since a configuration on an $n \times n$ surface can, of course, have $n^2$ tiles, added one at a time, a $\mathcal{O}(n^2)$ search space is asymptotically best possible.  We present this search space reduction in Section~\ref{section:partialorderreduction}.

Armed with this reduction rule, we obtain initial experimental results.  We translate tile assembly system files from a tile assembly simulator~\cite{isutas} into a Petri net formalism manipulable by the SMART model checking engines~\cite{smart}.  This produces an aTAM model checker that runs in $\mathcal{O}(|\mathcal{T}| \cdot n^4)$ time, where $|\mathcal{T}|$ is the number of tile types in tile assembly system $\mathcal{T}$, and $n \times n$ is the size of the assembly surface.  Atypical for a model checking problem---in which the practical limit usually is insufficient memory to store the state space---the limit in this case was the amount of memory required to represent the rules of the model. (Storage of the state space and of the reachability graph were small by comparison.)  We report on our experimental results in Section~\ref{section:experimentalresults}, and, in Section~\ref{section:conclusion}, discuss how to reduce both memory use and running time by constructing a front end specialized to self-assembly.

This paper is the first to connect formal verification with self-assembly.  There is, of course, extensive research on model checking of asynchronous, concurrent systems~\cite{25years_model_checking}.  There has also been some initial work using model checkers to study quantitative properties of biological pathways, such as protein creation~\cite{bio_pathways}.  Researchers have also started to apply formal methods to, and produce simulation software for, the brand-new area of DNA circuits~\cite{DNA_circuit_programming_language}, though no verification tools yet exist for such circuits.  Model checkers and temporal logics have been extremely useful throughout computer science, and we hope that the current work makes their power available to theorists and practitioners of algorithmic self-assembly.
\section{Background} \label{section:background}
\subsection{Tile self-assembly background}
Winfree's objective in defining the Tile Assembly Model was to provide a useful mathematical abstraction of DNA tiles combining in solution in a random, nondeterministic, asynchronous manner~\cite{winfree}.  Rothemund~\cite{rothemund}, and Rothemund and Winfree~\cite{programsize}, extended the original definition of the model.  For a comprehensive introduction to tile assembly, we refer the reader to~\cite{rothemund}.  Intuitively, we desire a formalism that models the placement of square tiles on the integer plane, one at a time, such that each new tile placed binds to the tiles already there, according to specific rules.  Tiles have four sides (often referred to as north, south, east and west) and exactly one orientation, \emph{i.e.}, they cannot be rotated.

A tile assembly system $\mathcal{T}$ is a 5-tuple $(T,\sigma,\Sigma, \tau, R)$, where $T$ is a finite set of tile types; $\sigma$ is the \emph{seed tile} or \emph{seed assembly}, the ``starting configuration'' for assemblies of $\mathcal{T}$; $\tau:T \times \{N,S,E,W\} \rightarrow \Sigma \times \{0,1,2\}$ is an assignment of symbols (``glue names'') and a ``glue strength'' (0, 1, or 2) to the north, south, east and west sides of each tile; and a symmetric relation $R \subseteq \Sigma \times \Sigma$ that specifies which glues can bind with nonzero strength.  In this model, there are no negative glue strengths, \emph{i.e.}, two tiles cannot repel each other.


A \emph{configuration of $\mathcal{T}$} is a set of tiles, all of which are tile types from $\mathcal{T}$, that have been placed in the plane, and the configuration is \emph{stable} if the binding strength (from $\tau$ and $R$ in $\mathcal{T}$) at every possible cut is at least 2.  An \emph{assembly sequence} is a sequence of single-tile additions to the frontier of the assembly constructed at the previous stage.  Assembly sequences can be finite or infinite in length.  The \emph{result} of assembly sequence $\overrightarrow{\alpha}$ is the union of the tile configurations obtained at every finite stage of $\overrightarrow{\alpha}$.  The \emph{assemblies produced by $\mathcal{T}$} is the set of all stable assemblies that can be built by starting from the seed assembly of $\mathcal{T}$ and legally adding tiles.  If $\alpha$ and $\beta$ are configurations of $\mathcal{T}$, we write $\alpha \longrightarrow \beta$ if there is an assembly sequence that starts at $\alpha$ and produces $\beta$.  An assembly of $\mathcal{T}$ is \emph{terminal} if no tiles can be stably added to it.
Researchers are, of course, interested in being able to \emph{prove} that a certain tile assembly system always achieves a certain output.  In~\cite{solwin}, Soloveichik and Winfree presented a strong technique for this: local determinism.  An assembly sequence $\overrightarrow{\alpha}$ is \emph{locally deterministic} if (1) each tile added in $\overrightarrow{\alpha}$ binds with the minimum strength required for binding; (2) if there is a tile of type $t_0$ at location $l$ in the result of $\alpha$, and $t_0$ and the immediate ``OUT-neighbors'' of $t_0$ are deleted from the result of $\overrightarrow{\alpha}$, then no other tile type in $\mathcal{T}$ can legally bind at $l$; the result of $\overrightarrow{\alpha}$ is terminal.  $\mathcal{T}$ is locally deterministic iff every legal tile assembly sequence of $\mathcal{T}$ is locally deterministic.  Local determinism is important because Soloveichik and Winfree showed that if $\mathcal{T}$ is locally deterministic, then $\mathcal{T}$ has a unique terminal assembly~\cite{solwin}.  In Section~\ref{section:localdet}, we consider how to test whether a tile assembly system is locally deterministic.
\subsection{Model checking background}
We present the fundamentals of one logic often used in formal verification: CTL (Computation Tree Logic)~\cite{CTL_origin}.  (We follow the presentation of CTL found in~\cite{temporal_logic_complexity}, which is standard.)  The CTL syntax may be defined as follows:
\begin{displaymath}
\varphi,\psi ::= \textrm{\textsf{E}}(\varphi \textrm{\textsf{U}} \psi) \; | \; \textrm{\textsf{A}}(\varphi \textrm{\textsf{U}} \psi) \; | \; \textrm{\textsf{EX}}\varphi \; | \; \textrm{\textsf{AX}}\varphi \; | \; \varphi \wedge \psi \; | \; \neg \varphi \; | \; P_1 \; | \; P_2 \; | \; \ldots
\end{displaymath}
where $AP=\{P_1,P_2,\ldots\}$ is a set of atomic propositions.  We interpret CTL statements over a \emph{pointed transition system} $S=\langle Q,R,l,s \rangle$, where $Q$ is a finite set of states of the system, $R \subseteq Q \times Q$ is a transition relation among states, $l: Q \rightarrow 2^{AP}$ is a labeling of states with propositions (essentially determining which atomic propositions are true in that state), and $s \in Q$ is the ``point'' or initial state of the system.

Intuitively, the initial state $s$ is the tile configuration at time step 0, consisting of the seed tile (or seed assembly) and nothing else.  States of $S$ are tile configurations on the $n \times n$ surface we have chosen to consider.  States $q$ and $q'$ have the property $qRq'$ iff there is a legal tile addition that transforms the assembly at $q$ into the assembly at $q'$.  Finally, the atomic propositions true at each state are precisely the assertions that a tile of a given type is present at a given surface location in that state, or that the location is empty.

Formally, a \emph{run} $\pi$ is a (possibly countably infinite) sequence of states $\sigma_{\pi}=s_0,s_1,s_2,\ldots$ with a labeling $l_{\pi}:\{s_0,s_1,\ldots\} \rightarrow 2^{AP}$.  The formula $\varphi$ holds at position $i$ of $\pi$ according to the following recursive definition:
\begin{align*}
\pi,i \models P &\Longleftrightarrow P \in l_{\pi}(s_i) \textrm{ (for } P \in AP \textrm{)} \\
\pi,i \models \neg \varphi &\Longleftrightarrow \pi,i \nvDash \varphi \\
\pi,i \models \varphi \wedge \psi &\Longleftrightarrow \pi,i \models \varphi \textrm{ and } \pi,i \models \psi \\
\pi,i \models \textrm{\textsf{X}}\varphi &\Longleftrightarrow \pi,i+1 \models \varphi \\
\pi,i \models \varphi \textrm{\textsf{U}} \psi &\Longleftrightarrow \exists j \geq i \textrm{ such that } \pi,j \models \psi \textrm{ and } \pi,k \models \varphi \textrm{ for all } i \leq k<j. \\
\end{align*}
Now let $T$ be a tree rooted at an initial state, such that every path through $T$ is a run.  We can define the CTL existence operator by
\begin{displaymath}
\pi,i \models \textrm{\textsf{E}}\varphi \Longleftrightarrow \pi',i \models \varphi \textrm{ for some } \pi' \textrm{ in } T \textrm{ such that } \pi[0,\ldots,i]=\pi'[0,\ldots,i].
\end{displaymath}
From the syntax above, it is possible to define ``for all'' quantifiers, such as \textsf{AF} (``along All paths, Finally something is true''), and \textsf{AG} (``along All paths, some statement holds Globally,'' \emph{i.e.}, in every state).

The states of a pointed transition system can be viewed as a partial order, with the the transitive closure of the transition relation generating the $\leq$-relation between states.  One technique to make model checking problems tractable is the use of \emph{partial order reductions}, rules (often based on concurrency or symmetry) that allow a model checker to consider a much smaller set of states while guaranteeing that a property holds in the reduced partial order iff it holds in the full transition system.  We use this technique in Section~\ref{section:partialorderreduction}.
%
%
\section{Interpreting the aTAM in CTL} \label{section:aTAM-CTL}
\subsection{The model checking problem for the aTAM}
The goal of this subsection is to show that the behavior of the aTAM on a finite surface can be expressed in CTL.  The model checking problem for the aTAM then reduces to checking whether a formula that expresses a particular assembly is contained in the set of possible outcomes for the tile assembly system we are checking---and that can be determined using known model checking algorithms.  In particular, if we know that a finite shape $S$ is a terminal assembly with respect to the binding rules of tile assembly system $\mathcal{T}$, and $\psi_S$ is a formula that asserts the presence of $S$ on an $n \times n$ surface, then (as we will see) if the CTL-interpretation of $\mathcal{T}$ satisfies \textsf{AF}$(\psi_{S,n})$, we have formally verified that $S$ is the unique terminal assembly of $\mathcal{T}$.  Throughout this subsection we fix a natural number $n$, and assume the self-assembly we are simulating takes place on an $n \times n$ surface (a finite subset of $\mathbb{Z} \times \mathbb{Z}$).

Our strategy is that we will define the behavior of $(k+1)n^2$ atomic propositions, where each proposition corresponds to a tile type (or lack thereof) at a location on the $n \times n$ surface. These atomic propositions act as ``agents'' that ``decide'' whether or not to change state from ``no tile is here'' to ``a tile of type $t$ is here,'' consistent with the binding rules of the tile assembly system we wish to express in CTL.  We formalize this with the following definition.

Let $\mathcal{T}=(T,\sigma)$ be a tile assembly system with $k$ distinct tile types $\{t_1,\ldots,t_k\}$, and binding rules defined by binding function $\beta$.  Then $\textrm{CTL}(\mathcal{T},n)$, \emph{the CTL interpretation of the behavior of $\mathcal{T}$ on $n \times n$ surfaces}, is defined as follows:
\begin{enumerate}
\item Atomic propositions $t^m_{ij}$ for each $0 \leq m \leq k$ and $0 \leq i,j < n$.  (Intuitively, if $t^m_{ij}$ is true, tile type $t_m$ is located at $(i,j)$, or, if $m=0$, then $(i,j)$ is empty.)
\item For each $(i,j)$ with $0 \leq i,j < n$, the following axiom (to capture that if a location is empty, then no tile is there):
    \begin{displaymath}
    t^0_{ij} \rightarrow \neg \left(\bigvee_{m \in \{1,\ldots,k\}} t^m_{ij} \right)
    \end{displaymath}
\item For each $t^m_{ij}$, the following axiom (to capture that exactly one tile can be placed on any filled location):
    \begin{displaymath}
    t^m_{ij} \rightarrow \neg \left(\bigvee_{y \in \{1,\ldots,k\} \setminus m} t^y_{ij} \right)
    \end{displaymath}
\item For each $t^m_{ij}$, the following axiom (to capture that once a tile has been placed, it will never be removed):
    \begin{displaymath}
    t^m_{ij} \rightarrow \textrm{\textsf{AG}}(t^m_{ij})
    \end{displaymath}
\item For each $t^m_{ij}$, include the following axioms to express the behavior of the binding function:

\begin{tabular}{|c|c|} \hline
Binding function of $\mathcal{T}$ & Axiom of $\textrm{CTL}(\mathcal{T},n)$ \\
\hline
$\beta(t_x,\emptyset,\emptyset,\emptyset) = t_y$ & $(t^0_{ij} \wedge t^x_{i(j+1)}) \rightarrow (\neg t^0_{ij} \wedge t^y_{ij})$ \\
\hline
$\beta(\emptyset,t_x,\emptyset,\emptyset) = t_y$ & $(t^0_{ij} \wedge t^x_{i(j-1)}) \rightarrow (\neg t^0_{ij} \wedge t^y_{ij})$ \\
\hline
$\beta(\emptyset,\emptyset,t_x,\emptyset) = t_y$ & $(t^0_{ij} \wedge t^x_{(i+1)j}) \rightarrow (\neg t^0_{ij} \wedge t^y_{ij})$ \\
\hline
$\beta(\emptyset,\emptyset,\emptyset,t_x) = t_y$ & $(t^0_{ij} \wedge t^x_{(i-1)j}) \rightarrow (\neg t^0_{ij} \wedge t^y_{ij})$ \\
\hline
$\beta(t_x,t_y,\emptyset,\emptyset) = t_z$ & $(t^0_{ij} \wedge t^x_{i(j+1)} \wedge t^y_{i(j-1)}) \rightarrow (\neg t^0_{ij} \wedge t^z_{ij})$ \\
\hline
& $\ldots$ and similarly with all other possible pairings of $t_x$ and $t_y$. \\
\hline
\end{tabular}
\item An axiom that enforces an initial state of the transition system that simulates the seed assembly.  Let $S$ be the set of points occupied by $\sigma$, the seed assembly of $\mathcal{T}$, and $\overline{S}$ be the set of points (on the $n \times n$ surface) not occupied by $\sigma$.  Let $\tau_{xy}$ be the tile type at location $(x,y)$ for each $(x,y) \in S$.  Then we include the following formula into the language:
    \begin{displaymath}
    \left( \bigwedge_{(x,y) \in S} p_{xy} \wedge t^{\tau_{xy}}_{xy} \right) \wedge \bigwedge_{(a,b) \in \overline{S}} t^0_{ab}
    \end{displaymath}
\end{enumerate}

We can define a formula that is a logical interpretation of any finite shape $S$ on an $n \times n$ surface, just as we defined a formula to be the interpretation of the seed assembly.
\begin{definition}
Let $S$ be a finite tile configuration in tile assembly system $\mathcal{T}$, such that $S$ completely fits on a surface of size $n \times n$.  Suppose $\{t_1,\ldots,t_k\}$ is the set of tile types of $\mathcal{T}$, and, for each $1 \leq i \leq k$, $T_i = \{(x,y) \in S \mid t_i \textrm{ is present at } (x,y)\}$.  Let $T_0$ be the set of all points of the surface that are not in any $T_i$.  Then we define the formula $\psi_{S,n}$ of CTL$(\mathcal{T},n)$ to be
\begin{displaymath}
\psi_{S,n} = \bigwedge_{0 \leq i \leq k \textrm{ and } (x,y) \in T_i} t^i_{xy}.
\end{displaymath}
\end{definition}
Note that this ignores the potential complication that a shape, viewed as a set of points, can be embedded on a surface in more than one location or orientation.  Since we are always considering shapes in the context of what can be built by a tile assembly system, we can assume without loss of generality that the seed of the tile assembly system is rooted at the origin, and this will eliminate uniqueness problems that might otherwise arise from multiple possible embeddings of the shape into the surface under consideration.

The main result of this section is the following theorem.  We defer the proof to the Appendix.
\begin{theorem} \label{theorem:modelchecking}
There exists an efficient procedure to interpret a tile assembly system $\mathcal{T}$ and a surface $n \times n$ within the temporal logic CTL$(\mathcal{T},n)$.  In particular, the question ``Does $\mathcal{T}$ have a unique terminal assembly that fits in the $n \times n$ surface?'' can be reduced to the problem of model checking on CTL$(\mathcal{T},n)$.
\end{theorem}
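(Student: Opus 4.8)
The plan is to establish the two claims of the theorem in turn: that CTL$(\mathcal{T},n)$ can be built efficiently, and that the unique-terminal-assembly question is equivalent to a concrete model checking query on the transition system that CTL$(\mathcal{T},n)$ induces. Efficiency I would dispatch by a direct count of the objects listed in the definition of CTL$(\mathcal{T},n)$. There are exactly $(k+1)n^2$ atomic propositions; axiom families (2)--(4) contribute $O(kn^2)$ axioms each, of size $O(k)$; family (5) contributes, for each of the $n^2$ locations, one axiom per binding rule of $\beta$, and the number of such rules is bounded by the description length of $\mathcal{T}$; and (6) is a single formula of size $O(n^2)$. Summing, the description length of CTL$(\mathcal{T},n)$ is polynomial in $|\mathcal{T}|$ and $n$, and every axiom is emitted by a mechanical sweep over locations and rules, so the interpretation is computable in polynomial time. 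This discharges the word \emph{efficient}.

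The core of the argument is a simulation lemma. I would let $M = \langle Q, R, l, s \rangle$ be the pointed transition system whose states are the surface labelings consistent with axioms (2)--(3), whose initial state $s$ is fixed by axiom (6), and whose transitions are exactly the single-location updates licensed by a binding axiom of family (5), with axiom (4) guaranteeing that no transition ever vacates a filled location. Then, by induction on the length of an assembly sequence, I would prove that the states reachable from $s$ in $M$ are in bijection with the assemblies producible by $\mathcal{T}$ that fit on the $n \times n$ surface, and that the $R$-edges correspond precisely to legal single-tile additions. The forward inclusion checks that each stable addition permitted by $\beta$ is matched by a firing binding axiom; the reverse inclusion checks that each binding axiom reflects a genuinely stable placement (one in which every cut has strength at least $2$), so that $M$ introduces neither spurious states nor spurious transitions.

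Granting the simulation lemma, terminal assemblies of $\mathcal{T}$ correspond exactly to the sink states of $M$, at which no binding axiom applies. To give $\textsf{AF}$ its intended meaning I would turn each sink into a self-loop, so that every run is infinite; note that $\psi_{S,n}$, which pins down every location of the surface, holds at precisely the one state encoding $S$, and that this state is absorbing when $S$ is terminal. I then claim that a candidate terminal configuration $S$ is the \emph{unique} terminal assembly of $\mathcal{T}$ on the surface if and only if $M \models \textsf{AF}(\psi_{S,n})$. If every run finally reaches the $\psi_{S,n}$-state, then, since every run is a legal assembly sequence, no assembly sequence can terminate elsewhere; conversely, a unique terminal $S$ forces every finite maximal path to end at $S$, and the added self-loops extend each such path to an infinite run that satisfies $\psi_{S,n}$ from that point onward. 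Thus the decision problem reduces to the single instance $M \models \textsf{AF}(\psi_{S,n})$, preceded (when no candidate $S$ is supplied) by a reachability search for the sinks, which reports ``no'' at once if two distinct sinks are found.

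The step I expect to be the main obstacle is the reverse inclusion of the simulation lemma: one must verify that the propositional encoding of $\beta$ in axiom family (5), together with the empty-location guard $t^0_{ij}$ and the monotonicity axiom (4), admits exactly the stable bindings of the aTAM and nothing more, including a faithful treatment of the nondeterminism that arises when several neighbor-combinations could each legally fill a single location. The subtlety is that one location may be fillable by different binding axioms in different orders, yielding genuine branching in $M$; arranging the correspondence so that $\textsf{AF}(\psi_{S,n})$ quantifies over all and only the legal assembly sequences is where the care lies. By contrast, the CTL technicality of promoting finite maximal paths to infinite runs via self-loops is routine, but must be stated so that $\textsf{AF}$ is well defined.
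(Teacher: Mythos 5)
Your proposal is correct and follows essentially the same route as the paper: your ``simulation lemma'' is precisely the content of the paper's two appendix lemmas (that the canonical transition system $M_{\mathcal{T},n}$ models CTL$(\mathcal{T},n)$ and is, up to isomorphism on the component reachable from the initial state, the only such model), and your reduction of unique terminal assembly to checking $\textsf{AF}(\psi_{S,n})$ is exactly the paper's. The only notable differences are that the paper characterizes terminal states by the explicit formula $\bigwedge_{i,j} t^0_{ij} \rightarrow \textsf{AG}(t^0_{ij})$ rather than as sinks, glosses over the self-loop issue for $\textsf{AF}$ that you rightly flag, and handles the ``no candidate $S$'' case by a perimeter test plus log-factor surface growth rather than your sink search --- all minor variations on the same argument.
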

\subsection{Testing for local determinism} \label{section:localdet}
While the general question, ``Given tile assembly system $\mathcal{T}$, is it locally deterministic?'' is undecidable, there is an open question in the literature whether there might exist an efficient test to catch the failure of local determinism~\cite{tiletemplates}, much as programmers of concurrent systems design tests for race conditions.  We answer this question affirmatively in this section.

\begin{theorem}
It adds no asymptotic complexity cost to check for local determinism of $\mathcal{T}$ when solving a model checking problem for CTL$(\mathcal{T},n)$.
\end{theorem}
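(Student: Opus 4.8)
The plan is to verify each of the three clauses in the definition of local determinism \emph{during} the single pass that constructs the transition system underlying the model checking of CTL$(\mathcal{T},n)$, adding only a constant factor of bookkeeping per transition and per surface location. The essential point is that building that transition system already forces the generator, for every reachable configuration and every empty frontier location $(i,j)$, to enumerate the tile types $t_m$ that can legally bind there together with the neighboring glues that enable the binding; this is exactly the data needed to emit the outgoing edges. My strategy is to piggyback the local-determinism test on this enumeration, so that no new traversal of the state space is incurred.

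Two of the three clauses are then nearly immediate. For clause~(1), that every added tile binds with exactly the minimum strength, I would note that this is a purely local predicate of a single transition: when the generator computes the total strength contributed by the present neighbors of a candidate $t_m$ at $(i,j)$, it additionally compares that strength against the binding threshold and flags any over-strength binding, which is $\mathcal{O}(1)$ extra work per (location, tile-type) pair and hence absorbed into edge generation. For clause~(3), that the result of every assembly sequence is terminal, no separate work is required at all: the formula $\textsf{AF}(\psi_{S,n})$ already asserts that every path reaches the target $S$, and $S$'s terminality is recorded when it is produced, so the fixpoint computation deciding $\textsf{AF}$ yields clause~(3) as a byproduct.

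The substantive clause is~(2): after deleting a placed tile $t_0$ at a location $l$ together with its immediate OUT-neighbors, no tile type other than $t_0$ may legally bind at $l$. To test this I would maintain, for each location $(i,j)$, a table recording every pair consisting of a set of neighboring input glues and the tile type $t_m$ that those glues cause to bind at $(i,j)$, as these arise across the reachable configurations the generator visits. A violation of clause~(2) is then a \emph{collision} in this table: a location at which some tile type $t' \neq t_0$ can bind using glues that remain available in the residual left by deleting $t_0$ and its OUT-neighbors. The table has $\mathcal{O}(n^2)$ locations, each holding a number of entries bounded by a constant depending only on $|\mathcal{T}|$ and the glue alphabet, and each insertion and collision query costs $\mathcal{O}(1)$; the whole check therefore stays inside the model-checking budget, giving the claimed absence of asymptotic overhead.

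The hard part will be making the collision test for clause~(2) provably faithful, because an OUT-neighbor is defined relative to a \emph{particular} assembly sequence, whereas the transition-system pass observes all sequences at once, and only through their resulting configurations. The delicate point is that the residual relevant to clause~(2) retains not only $t_0$'s input neighbors but also any later-placed neighbors that did not bind through $t_0$; the extra strength such neighbors supply could be exactly what lets a competing $t'$ bind, so a test that inspected only the neighborhood present when $t_0$ was first added would be unsound. I would therefore argue that, because the model checker explores every reachable configuration, the per-location table sees every neighbor-glue pattern that can surround $l$ in \emph{any} sequence, and that quantifying collisions over all these patterns dominates the sequence-by-sequence condition of Soloveichik and Winfree. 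Establishing this completeness---that the state-based, glue-pattern bookkeeping faithfully represents the sequence-dependent deletion of $t_0$ and its OUT-neighbors, uniformly over every legal assembly sequence---is the one step demanding genuine care rather than routine accounting.
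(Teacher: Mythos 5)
Your route is genuinely different from the paper's, and the difference matters. The paper does not instrument the state-space generator at all: it encodes the local-determinism requirement \emph{as additional CTL formulas}. For each location $(x,y)$ and each transition rule $\varphi \rightarrow \psi$ of CTL$(\mathcal{T},n)$ it forms
$\eta^{\varphi \rightarrow \psi}_{xy} = \varphi \rightarrow \bigl( \textrm{\textsf{AF}}\psi \wedge \bigwedge_{\varphi'} \neg \varphi' \bigr)$,
where $\varphi'$ ranges over the antecedents of all other transition rules affecting $(x,y)$; the conjunction of all the $\eta$'s is polynomial in $|\mathcal{T}|$ and $n$, so checking it is asymptotically no harder than the unique-terminal-assembly check already being performed. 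The benefit of that encoding is precisely that it never mentions OUT-neighbors or deletion: it replaces the sequence-relative clause~(2) of Soloveichik--Winfree with the purely state-based condition that no competing transition rule for a filled location is ever enabled along any run, which is directly a property of the transition system the model checker already builds. Your approach, by contrast, buys a potentially faster online check that could report a violation before the full state space is built, and your handling of clauses~(1) and~(3) is fine.

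The gap is the one you name yourself, and it is not a routine detail: your clause-(2) collision table records glue patterns per configuration, while the definition of an OUT-neighbor is relative to a particular assembly sequence, so you must prove that quantifying collisions over all reachable neighbor-glue patterns at $l$ is equivalent to (not merely implied by, and not merely implying) the sequence-by-sequence deletion condition. You assert that the state-based test ``dominates'' the sequence condition but supply no argument, and soundness in the other direction is also at issue --- a pattern that appears in some reachable configuration need not be the residual left by deleting $t_0$ and its OUT-neighbors in any single sequence, so your test could reject systems that are in fact locally deterministic. As it stands the central step of your proof is a conjecture. Note that the paper's own equivalence claim (``this is equivalent to the requirement of local determinism'') is itself asserted rather than argued, but by working with enabled transition antecedents rather than with deletions it at least avoids the sequence-dependence problem on which your argument is stuck.
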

\begin{proof}
Let $\varphi \rightarrow \psi$ be a transition rule for location $(x,y)$ in CTL$(\mathcal{T},n)$.  For example, for the transition rule
\begin{displaymath}
(t^0_{ij} \wedge t^x_{i(j+1)}) \rightarrow (\neg t^0_{ij} \wedge t^y_{ij})
\end{displaymath}
$\varphi = t^0_{ij} \wedge t^x_{i(j+1)}$, and $\psi = \neg t^0_{ij} \wedge t^y_{ij}$.  Then, for each $(x,y)$ and each transition rule $\varphi \rightarrow \psi$, define the formula
\begin{displaymath}
\eta^{\varphi \rightarrow \psi}_{xy} = \varphi \rightarrow \left( \textrm{\textsf{AF}}\psi \bigwedge_{\varphi'} \neg \varphi' \right)
\end{displaymath}
where $\varphi'$ ranges over antecedents of all other transition rules that might affect $(x,y)$.  In words, $\eta^{\varphi \rightarrow \psi}_{xy}$ asserts that if a transition rule for a location is enabled, (1) that transition rule will eventually be executed, and (2) no other transition rules will ever be enabled that affect the location in question.  As this is equivalent to the requirement of local determinism, and it's a conjunction (of length polynomial in the number of tiles of $\mathcal{T}$) of well-formed assertions in CTL, the model checking problem for the conjunction of $\eta$'s is no more complex than the model checking problem to determine unique terminal assembly.
\end{proof}
\section{Partial order reductions for rectilinear tile assembly systems} \label{section:partialorderreduction}
We move now from the theory of formal verification to its practical application.  An arbitrary tile assembly system can build exponentially many distinct legal configurations, with respect to the size of the assembly surface, so the aTAM model checking problem in full generality suffers from intractable state explosion. Therefore, we simplify the problem by considering only \emph{rectilinear} TASes.  A rectilinear TAS is one in which all growth proceeds from the south to the north, and/or from the west to the east, with the minimum required binding strength.  (See Figure~\ref{figure:rectilinear} for a schematic of such a tile assembly system.)  We shall see there is a computationally inexpensive way to check for rectilinearity of an input TAS, as well as a partial order reduction that renders model checking of rectilinear TASes to be tractable.  Combined, this provides (for an $n \times n$ assembly surface) a $\mathcal{O}(n^2)$ algorithm that either verifies that the input TAS has a unique terminal assembly, or provides an execution trace that demonstrates the input system's failure to be locally deterministic and/or rectilinear.

\begin{figure}
\centering
\caption{\label{figure:rectilinear}Example assembly sequence for a rectilinear tile assembly system.  In (i) through (iii), the TAS builds its edges north and east, then starts building its interior (to the east and north) in (iv).  By contrast, the configurations built in (v) are not rectilinear, as tiles are placed to the south or west of other tiles.}
\includegraphics[height=2.5in]{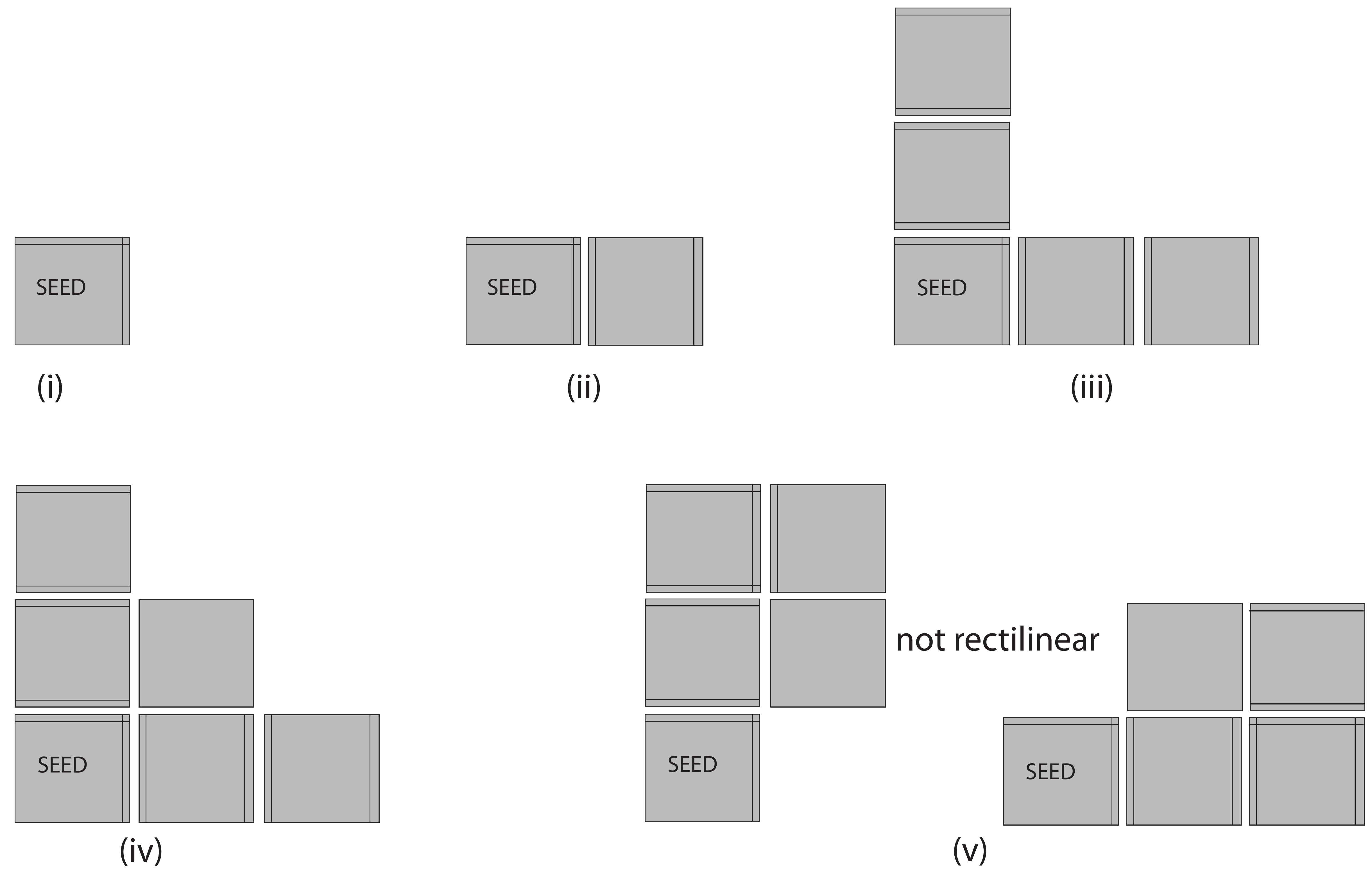}
\end{figure}
For model checking to have practical application to nanoscale self-assembly, some state space reduction is essential, because even locally deterministic rectilinear TASes, which are tightly constrained, achieve an exponential blowup with respect to the size of the surface.  This is shown precisely in the following proposition, in which we assume (as is common in the literature) that the seed assembly of the input TAS is one tile in size.
\begin{proposition} \label{proposition:numconfigs}
Let $\mathcal{T}$ be a locally deterministic rectilinear tile assembly system with $|\sigma|=1$, and $n \geq 1$ an integer.  Then the (worst-case) number of legal configurations $\mathcal{T}$ can build on an $n \times n$ surface, if $\sigma$ is placed at location $(0,0)$ is given by:
\begin{displaymath}
\frac{2(2n-1)!}{n!(n-1)!} - 1.
\end{displaymath}
A simple example of a TAS with this worst-case behavior is Winfree's seven-tile TAS that produces the discrete Sierpinski Triangle through an XOR calculation.
\end{proposition}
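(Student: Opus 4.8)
The plan is to recognize that the target expression is a central binomial coefficient in disguise, and that it counts the order ideals of a grid poset. First I would rewrite the stated quantity: since $\frac{(2n)!}{n!\,n!}=\frac{2n\cdot(2n-1)!}{n\cdot(n-1)!\,n!}=\frac{2(2n-1)!}{(n-1)!\,n!}$, the number in the proposition is exactly $\binom{2n}{n}-1$. This reframing points to the combinatorial heart of the argument: the legal configurations of a worst-case system should be in bijection with the order ideals (down-sets) of the $n\times n$ grid poset $\mathbf{n}\times\mathbf{n}$ under the componentwise order $(i',j')\le(i,j)\iff i'\le i \text{ and } j'\le j$, of which there are exactly $\binom{2n}{n}$, with the single empty ideal excluded by the permanent presence of the seed.

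The central step is a characterization of reachable configurations. I would argue that, because $\mathcal{T}$ is rectilinear, no tile is ever placed to the south or west of an occupied location, so at every stage the occupied region is down-closed under $\le$; equivalently, each interior tile at $(i,j)$ binds cooperatively with minimum strength and can appear only once both its south neighbour $(i,j-1)$ and its west neighbour $(i-1,j)$ are present. Hence every legal configuration is an order ideal of $\mathbf{n}\times\mathbf{n}$ containing the seed location $(0,0)$, which immediately bounds the number of configurations by $\binom{2n}{n}-1$. For the converse, needed to realize the worst case, I would show that in a system whose terminal assembly fills the entire surface and in which each location is fillable as soon as its poset-predecessors are occupied, every such order ideal is actually reachable: one adds tiles along any linear extension of the ideal, each addition being legal because its required south and west neighbours are already in place, with local determinism guaranteeing that the tile type forced at each location is unique. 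This establishes that the maximum number of legal configurations, over all locally deterministic rectilinear systems with $|\sigma|=1$ seeded at $(0,0)$, equals the number of nonempty order ideals of $\mathbf{n}\times\mathbf{n}$.

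It then remains to count those order ideals. I would invoke the standard bijection between down-sets of $\mathbf{n}\times\mathbf{n}$ and monotone lattice paths: the staircase boundary separating the occupied region from its complement runs from one corner of the $n\times n$ box to the opposite corner using $n$ unit steps east and $n$ unit steps north, and there are $\binom{2n}{n}$ such paths. Discarding the empty ideal—the unique down-set not containing the minimum element $(0,0)$—leaves $\binom{2n}{n}-1$, which matches the reframed formula. Finally I would confirm the advertised example: Winfree's seven-tile Sierpinski system fills the whole square and has exactly the cooperative grid dependency, each interior tile requiring both its south and its west neighbour, so all down-sets are realized and the worst case is attained.

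The main obstacle I anticipate is not the counting, which is classical, but pinning down the characterization of reachable configurations with enough care to support both quantifiers. For the upper bound I must derive, from rectilinearity and minimum-strength binding, that the occupied set is invariably a down-set; for achievability I must argue that local determinism permits an arbitrary down-set to be assembled one tile at a time without ever enabling a spurious competing tile type. Establishing that \emph{no} rectilinear system exceeds this count while simultaneously exhibiting one that meets it—i.e., getting the worst-case quantifier right—is where the argument needs the most attention.
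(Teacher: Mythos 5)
Your proof is correct, and it reaches the stated count by a genuinely different route from the paper's. The paper's proof constructs a directed graph whose nodes are decorated with the number of configurations sharing a given ``profile'' (how far the west-column and south-row arms extend, and what interior that forces), observes that the decorations form a diamond-shaped portion of Pascal's triangle, and then obtains the closed form by citing the corresponding OEIS entry for the sum of entries in such a diamond. You instead note the identity $\frac{2(2n-1)!}{n!\,(n-1)!}=\binom{2n}{n}$ --- which the paper never makes explicit --- and give a direct bijection between worst-case legal configurations and nonempty order ideals of the grid poset $\mathbf{n}\times\mathbf{n}$, counted via monotone lattice paths. Your route is more self-contained (no appeal to an external sequence database) and isolates the combinatorial content cleanly; the paper's route makes the recursive structure of the configuration space more visible but outsources the actual counting. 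Both arguments rest on the same essential observation, namely that in the worst case a configuration is determined by, and freely chosen as, a down-closed subset of the grid containing the seed. Both also leave the worst-case quantifier at a comparable level of informality: the paper simply assumes the worst case is a surface-filling system like the Sierpinski TAS, and your upper-bound claim that \emph{every} rectilinear system produces only down-sets would need care (a tile binding by a single strength-$2$ west glue at an interior location yields reachable configurations that are not down-sets, so one must argue such systems cannot produce \emph{more} configurations). To your credit, you explicitly flag this as the delicate point, which the paper does not.
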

We defer the proof to the Appendix.  The result that permits us to perform practical model checking experiments on tile assembly systems is the following.
\begin{theorem}
There exists a polynomial-time algorithm $A$ that does the following.  Given an input TAS $\mathcal{T}$ and a surface size $n$, $A$ either produces a legal assembly sequence of $\mathcal{T}$ that demonstrates $\mathcal{T}$ is not rectilinear, or $A$ correctly asserts that $\mathcal{T}$ has a unique terminal assembly, or $A$ produces two legal assembly sequences of $\mathcal{T}$ that will result in distinct terminal assemblies.  Further, $A$ only needs to evaluate $\mathcal{O}(n^2)$ configurations of $\mathcal{T}$.
\end{theorem}
The proof appears in the Appendix.
\section{Experimental results} \label{section:experimentalresults}
Probably the most-used aTAM simulator is Matthew Patitz's ISU TAS~\cite{isutas}.  We wrote a Java program that translates ISU TAS tile assembly system files into the language of the SMART model checking engines~\cite{smart}.  SMART, the Stochastic Model-checking Analyzer for Reliability and Timing, was initially intended as software to describe and analyze complex timed and stochastic models.  It has developed to include model checking of both stochastic and nondeterministic systems.  We translate tile assembly systems into nondeterministic Petri nets that SMART can manipulate.  Our experiments use SMART version 3.1, obtained from Andrew Miner.

A Petri net is a widely-used structure to model concurrent systems~\cite{Petrinets}.  We will not define Petri nets formally here; they can be thought of as directed graphs along which tokens move.  We use Petri nets extended with transition guards.  Tokens are located on vertices, and can move along an edge to neighboring vertices if the transition for that edge is enabled by the guard for the transition.  The state of the system is the snapshot of all current token locations.  We translate a tile assembly system $\mathcal{T}$ acting on an $n \times n$ surface into a Petri net with $(|\mathcal{T}|+1)  n^2$ vertices, one for each assertion, ``tile type $t$ is located at $(x,y)$,'' or ``location $(x,y)$ is empty.''  For $0 \leq x,y < n$, there are directed edges from the vertex corresponding to ``$(x,y)$ is empty'' to each state corresponding to ``$t$ is located at $(x,y)$'' for $t$ a tile type in $\mathcal{T}$.  In the initial state, tokens are placed to simulate the seed assembly of $\mathcal{T}$.  The transition rules of the Petri net simulate the self-assembly of the tiles: if location $(x,y)$ has west neighbor $t$ and south neighbor $t'$, a nondeterministic transition is enabled for $(x,y)$ iff a tile could legally bind to that configuration in the aTAM.  A SMART code fragment appears in Figure~\ref{figure:SMARTcode} in the Appendix.

The experiment we ran (on different TASes at different surface sizes) was to verify that the input TAS achieved a unique terminal assembly for the input surface size.  SMART verified this by first building the transition system induced by the Petri net, and then calculating the cardinality of the set ``all reachable states minus all states with successors.''  We performed these experiments on a 2.13 Ghz Intel Xeon CPU with 48gb RAM running Linux.  Our experimental results show that, atypically for a model checking problem, the limiting factor is the memory required to store the rules of the model, not the memory required to store the state space.  (Data supporting this conclusion appear in Table~\ref{table:experiments-memory} in the Appendix.)  Further, the size of the model depended heavily on the logical complexity of the binding rules, or the Petri net \texttt{guard} commands.  Figure~\ref{figure:memorydata} shows this: a TAS with 333 tile types required less memory to model than a TAS with 128 tile types, because the possible bonds induced by its tiles were logically simpler to describe.

Since the size of the state space is not a major concern, all our experiments use explicit model checking (building the entire transition system), instead of symbolic model checking (using bounded decision diagrams to represent multiple states in a transition system).  Figure~\ref{figure:experiments} graphs our results. The numerical data appear in Table~\ref{table:experiments-time} in the Appendix.

The Sierpinski Triangle TAS is due to Winfree.  Kautz and Lathrop designed the TASes for the Sierpinski Carpet and a ``numerically self-similar'' variant~\cite{discretecarpet}.  The Fibered Sierpinski Triangle TAS is due to Patitz and Summers~\cite{selfsimilar_journal}.  Unlike the other TASes, the Fibered Sierpinski Triangle is not rectilinear, and it does not place a tile on every point in the first quadrant.  So the number of configurations to count for a given surface size $n$ is different, compared to the rest of the TASes.  This accounts for the different shape of the curve in Figure~\ref{figure:experiments}.

SMART built the transition system by starting with the initial configuration, and then checking each location $(x,y)$ to see if it was legal to enable a transition system from ``$(x,y)$ is empty'' to ``$(x,y)$ contains $t$'' for each tile type $t$.  Since there are $n^2$ total configurations (after partial order reductions), $n^2$ locations, and $|\mathcal{T}|$ tile types, this requires $\mathcal{O}(|\mathcal{T}| \cdot n^4)$ operations.  However, this algorithm does not take advantage of special characteristics of tile assembly systems.  We believe a specialized front end for self-assembly can significantly reduce both memory cost and running time; we discuss this in Section~\ref{section:conclusion}.
\begin{figure}
\centering
\caption{\label{figure:experiments}Experimental results: length of time required to verify unique terminal assembly with respect to the size of the (square) assembly surface.  Additional experiments on the Sierpinski Carpet Variant were not possible, because of memory limitations; similarly, the spike in time required for the Sierpinski Carpet was due to nearing the limit of system memory (see Figure~\ref{figure:memorydata}).}
\includegraphics[height=2in]{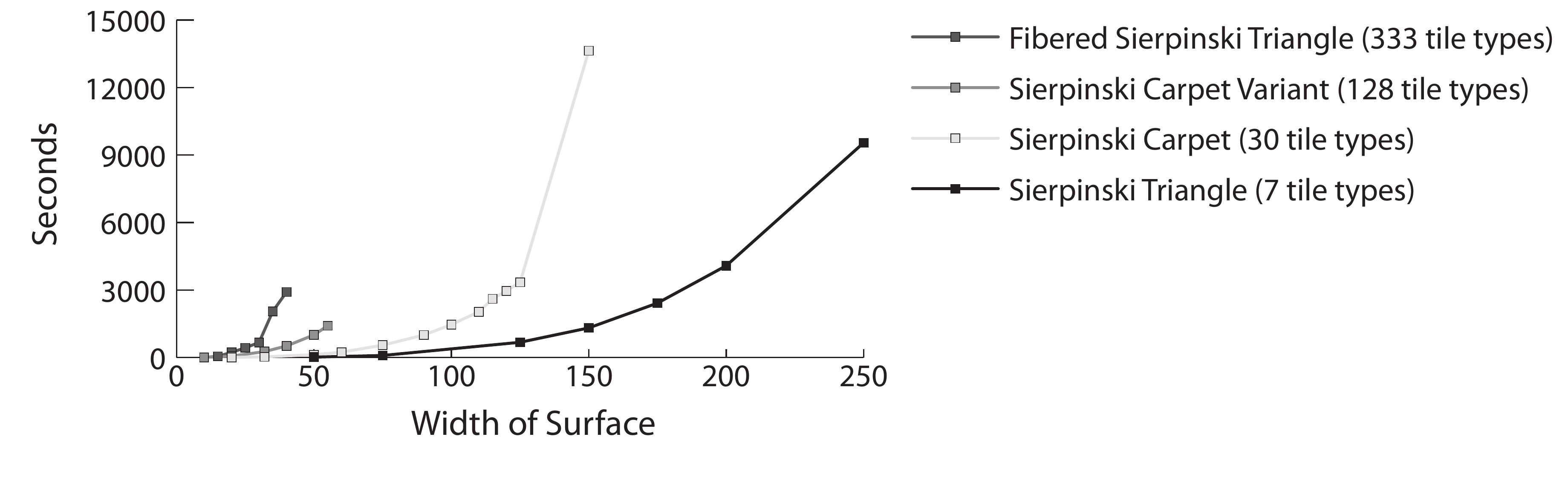}
\end{figure}
\begin{figure}
\centering
\caption{\label{figure:memorydata}Experimental results: memory required to define tile binding rules for each location on the (square) assembly surface.  This does not include memory required for either the state space or the reachability graph (see Table~\ref{table:experiments-memory} in the Appendix).}
\includegraphics[height=2.5in]{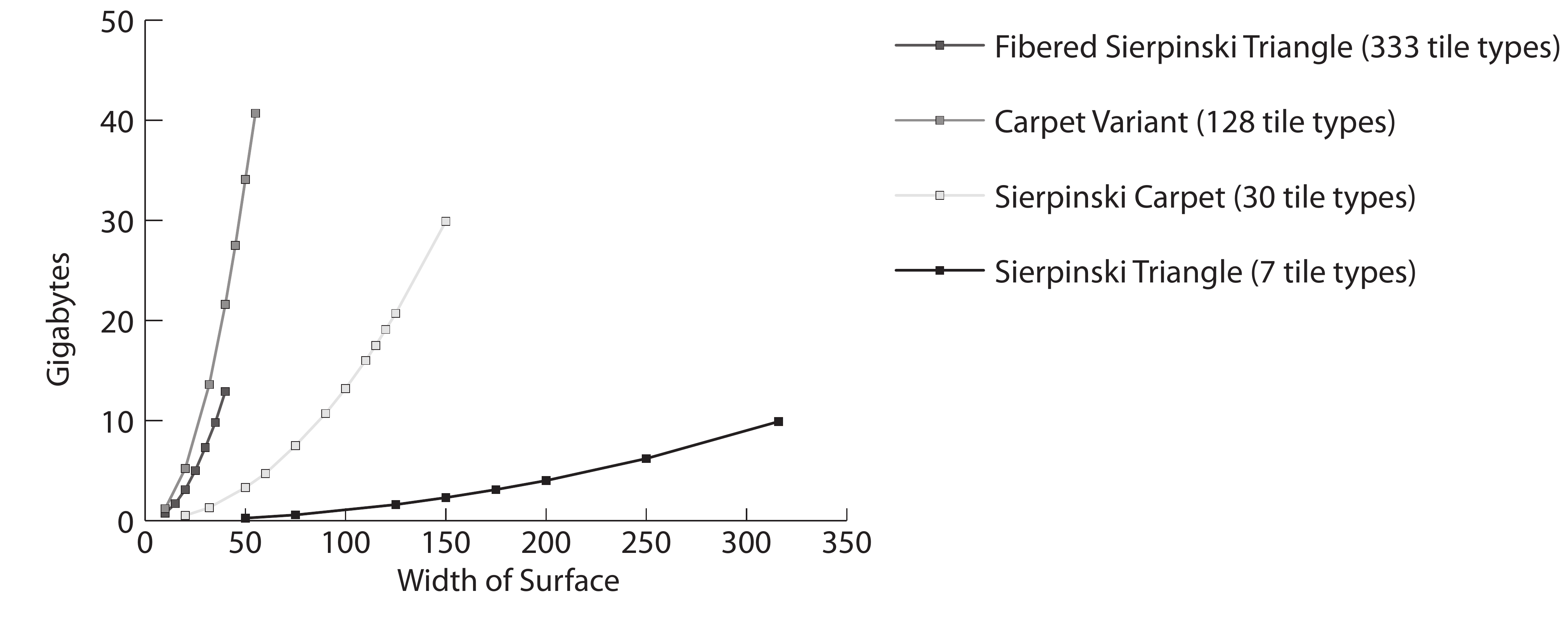}
\end{figure}
\section{Conclusion and future work} \label{section:conclusion}
The primary contribution of this paper was the interpretation of a self-assembly model into CTL, in such a way that an important class of tile assembly systems was in-principle tractable to model checking.  Experimentally, however, we made almost no use of the power of CTL, but were solely interested in counting the number of deadlock states of a transition system.  Modeling the aTAM, in which tiles bind forever in an error-free manner, is strictly simpler than modeling a more realistic system (such as Winfree's Kinetic Tile Assembly Model, or kTAM) in which tiles can bind in error, or fall off of assemblies after initially binding.  Therefore, we see four important directions for future work.
\begin{description}
\item[Reduce memory cost] The transition rules are the same for each location on the surface.  Currently, SMART does not take advantage of this, instead storing a distinct copy of the rules for each location.  We could significantly reduce memory cost with a front end specialized to tile assembly, so that the size of the state space becomes the primary memory issue (as is the case with more ``standard'' model checking problems).
\item[Reduce running time] Produce a \texttt{guard} and transition manager for SMART specialized to self-assembly.  Since tiles can only bind at the frontier of the growing assembly, there are at most linearly-many locations to check at each stage, instead of having to check all $n^2$ steps.  Further, an on-the-fly construction of a hash table of tiles that can legally bind to a given configuration of neighbors will eliminate the need to consider all $|\mathcal{T}|$ tile types for each location at each stage. This might permit a reduction in running time from $\mathcal{O}(|\mathcal{T}| \cdot n^4)$ to $\mathcal{O}(n^2 + |\mathcal{T}|)$.
\item[Verify other classes of TAS] There are many tile assembly systems that have strong symmetry properties, which, we believe, will induce partial order reductions.  We plan to explore making the model checking problem tractable for a wide variety of TASes, not just the rectilinear ones.
\item[Verify within more realistic models] Use probabilistic CTL to verify stochastic, error-permitting models of self-assembly.  For example, the SMART CTMC engine could be used to verify TASes in the kTAM, much as Winfree's \verb"xgrow" simulation software simulates kTAM TASes as continuous-time Markov chains.
\end{description}
\section*{Acknowledgements}
I am grateful to Soma Chaudhuri, Steve Kautz, Erik Winfree and Ting Zhang for helpful discussions.  I'm grateful to Andrew Miner for helpful discussions, for the use of the SMART model checking software, and for time in his computer lab to conduct the experimental work that appears in this paper.
\bibliography{tile_references}
\appendix
\section{Proof of Theorem~\ref{theorem:modelchecking}}
We now build the tools needed to solve the model checking problem for the aTAM.
\begin{definition}
Let $\mathcal{T}=\langle T, \sigma \rangle$ be a tile assembly system, and let $n \in \mathbb{N}$ be large enough that $\sigma$ fits completely in the surface $n \times n$.  We define $M_{\mathcal{T},n}$, \emph{the canonical transition system for $\mathcal{T}$ on $n \times n$}, as follows.
\begin{enumerate}
\item The initial state of $M_{\mathcal{T},n}$ is the configuration with $\sigma$ placed on the surface, with the lower-left corner of $\sigma$ at $(0,0)$.  (Since a finite seed assembly can be encoded into a single seed tile, by increasing the cardinality of |T| in a way that does not affect our results in this paper, we can assume without loss of generality that there exists a unique, unambiguous way to place $\sigma$ on the surface, so it is ``rooted'' at $(0,0)$.)
\item The states of $M_{\mathcal{T},n}$ are the tile configurations on $n \times n$ that can be achieved by legal tile additions, starting with $\sigma$, according to the binding rules of $\mathcal{T}$.
\item The transition relation of $M_{\mathcal{T},n}$ is defined in the natural way: configuration $c$ can transition to configuration $c'$ if there is a legal single-tile addition that transforms $c$ into $c'$.
\item $M_{\mathcal{T},n}$ has associated with it a set of $(k+1)n^2$ atomic propositions (where $|T|=k$), which we place in one-one correspondence with triple $(m,i,j)$, where $0 \leq m \leq k$, and $0 \leq i,j <n$.
\item The labeling function of $M_{\mathcal{T},n}$ maps a state to the subset of atomic propositions that correspond to the tiles placed at each location in that state, and to the empty spaces present at that state.
\end{enumerate}
\end{definition}
Not surprisingly, the canonical model generated by $\mathcal{T}$ and $n$ is a model for CTL$(\mathcal{T},n)$.
\begin{lemma}
$M_{\mathcal{T},n} \models \textrm{CTL}(\mathcal{T},n)$.
\end{lemma}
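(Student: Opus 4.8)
The plan is to verify that $M_{\mathcal{T},n}$ satisfies each of the six families of axioms comprising $\textrm{CTL}(\mathcal{T},n)$, proceeding family by family. Since $M_{\mathcal{T},n} \models \textrm{CTL}(\mathcal{T},n)$ means that every axiom of the theory holds at the initial state of $M_{\mathcal{T},n}$ (and, for those axioms that assert something about every reachable state, at all reachable states), the proof reduces to a case analysis over axiom types, with the labeling function and transition relation supplied by the definition of the canonical model furnishing the semantics in each case.

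First I would dispatch the purely propositional consistency axioms, items~2 and~3 of the definition of $\textrm{CTL}(\mathcal{T},n)$. These follow immediately from the labeling function: by item~5 of the definition of $M_{\mathcal{T},n}$, each location $(i,j)$ of a state is mapped to exactly one of the $k+1$ propositions $t^0_{ij}, t^1_{ij}, \ldots, t^k_{ij}$, namely the empty-marker or the unique tile type occupying it. Hence no reachable state can satisfy $t^0_{ij}$ together with some $t^m_{ij}$ for $m \geq 1$, nor two distinct $t^m_{ij}, t^y_{ij}$ simultaneously, so both axioms hold everywhere. The seed axiom (item~6) is equally direct: by item~1 of the definition of $M_{\mathcal{T},n}$, the initial state places $\sigma$ rooted at the origin and leaves every other surface location empty, which is precisely the conjunction the seed axiom asserts.

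Next I would handle the monotonicity axiom (item~4), $t^m_{ij} \rightarrow \textrm{\textsf{AG}}(t^m_{ij})$, by induction along an arbitrary run $\pi$ of $M_{\mathcal{T},n}$. The transition relation of item~3 of the definition of $M_{\mathcal{T},n}$ only ever adds a tile to a previously empty location and never deletes or overwrites an occupied one, so if $t^m_{ij}$ holds at a state of $\pi$ then it holds at its successor, and hence at every later state. Since $\pi$ was arbitrary, $\textrm{\textsf{AG}}(t^m_{ij})$ holds whenever $t^m_{ij}$ does.

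The main obstacle is the family of binding-function axioms (item~5). The subtlety is that each such axiom is written as a static implication whose antecedent (e.g.\ $t^0_{ij} \wedge t^x_{i(j+1)}$) and consequent (e.g.\ $\neg t^0_{ij} \wedge t^y_{ij}$) are mutually inconsistent within a single state; the intended reading is transitional, matching the guarded-command semantics of the Petri-net encoding. I would therefore make this reading explicit, interpreting each rule as asserting the existence of a successor (an $\textrm{\textsf{EX}}$-transition) realizing the consequent, and then prove a two-way correspondence. In one direction, whenever the neighbor pattern of an antecedent is present at a state $c$, the binding function $\beta$ licenses exactly the tile addition named in the consequent, so by item~3 of the definition of $M_{\mathcal{T},n}$ there is a transition $c \longrightarrow c'$ with $c'$ satisfying the consequent. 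In the other direction, every transition of $M_{\mathcal{T},n}$ is a single legal tile addition, hence generated by some clause of $\beta$, and so witnesses one of the item~5 axioms. Establishing that this correspondence is exact---that the enumerated axioms capture all and only the legal additions determined by $\beta$, across all pairings and triples of occupied neighbors---is the bulk of the work, and where care is needed to match the geometry of the neighbor indices $(i,j\pm 1)$ and $(i\pm 1,j)$ to the argument order of $\beta$.
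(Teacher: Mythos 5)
Your proposal is correct and follows the same basic route as the paper's proof, which is a three-sentence sketch asserting exactly the things you verify in detail: the seed conjunction holds because every state contains the seed, tiles never fall off, no location is doubly occupied, and the transition rules of the model match those of the theory. The one place where you go substantively beyond the paper is your treatment of the item~5 binding axioms. You correctly observe that, read as state formulas, each such axiom has an antecedent containing $t^0_{ij}$ and a consequent containing $\neg t^0_{ij}$, so it can only be satisfied vacuously --- which would make the axiomatization useless --- and you repair this by reading the rules transitionally (as $\textsf{EX}$-assertions about successor states). The paper's proof silently assumes this intended reading (the later local-determinism section, which wraps the consequents in $\textsf{AF}$, confirms the author has a temporal reading in mind) but never makes it explicit. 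Your second direction of the correspondence --- that every transition of $M_{\mathcal{T},n}$ is licensed by some clause of $\beta$ --- is not actually needed for this lemma, since the axioms are implications and impose no prohibition on extra transitions; it is, however, exactly what the subsequent uniqueness lemma requires, so the work is not wasted. In short: same approach, executed with more care than the original, and the extra care is warranted.
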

\begin{proof}
All states of $M_{\mathcal{T},n}$ contain the seed tile (or seed assembly), so all states satisfy the conjunction requiring the CTL$(\mathcal{T},n)$-analogue of the seed to be present.  The only way to transition from one state to the next is through a legal binding of a tile to an empty location.  CTL$(\mathcal{T},n)$ ensures that tiles never fall off, that multiple tiles are never placed to the same location, and that the same transition rules for a location exist that exist in $M_{\mathcal{T},n}$.  So $M_{\mathcal{T},n}$ is a model for CTL$(\mathcal{T},n)$.
\end{proof}
More importantly, up to isomorphism, there is only one model for CTL$(\mathcal{T},n)$ of the same size as $M_{\mathcal{T},n}$.  This will allow us to apply model checking algorithms without having to worry about finding ``counterexamples'' that only show the system being checked is modeled incorrectly.  (This is a common concern in formal verification, as the systems to be verified are often so complex, the only way to make the problem tractable is to make a submodel that hopefully captures the important aspects of the system.)
\begin{lemma}
Let $N$ be a pointed transition system such that $N \models \textrm{CTL}(\mathcal{T},n)$, and such that the number of atomic propositions of $N$ is the same as the number of atomic propositions of $M_{\mathcal{T},n}$.  Then the subset of $N$ that is connected to the initial state is isomorphic to $M_{\mathcal{T},n}$.
\end{lemma}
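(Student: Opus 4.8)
The plan is to exhibit an explicit isomorphism $f$ from the states of $N$ reachable from its initial state $s_N$ onto $M_{\mathcal{T},n}$, defined by reading off, from the labeling of each state, the tile configuration that labeling encodes. The guiding idea is that axioms (2) and (3) force the labeling of every reachable state to encode a well-defined configuration, the seed axiom (6) pins down the initial configuration, and axioms (4) and (5) force the transition structure of $N$ to coincide with the ``add one legal tile'' relation that defines $M_{\mathcal{T},n}$. Since $M_{\mathcal{T},n}$ is by construction the transition system whose states are exactly the reachable configurations, matching configurations will match states.

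First I would make precise the map from states to configurations. Fix a state $q$ of $N$ reachable from $s_N$. By axiom (2), if $t^0_{ij} \in l(q)$ then no $t^m_{ij}$ with $m \geq 1$ lies in $l(q)$, and by axiom (3) at most one $t^m_{ij}$ with $m \geq 1$ lies in $l(q)$; so at each location $(i,j)$ the label of $q$ names at most one tile type or marks the location empty. That each location is marked by \emph{exactly} one proposition follows by induction along the path from $s_N$ to $q$: the seed axiom (6) gives this in $s_N$, and every binding axiom in (5) rewrites a single $t^0_{ij}$ to some $t^y_{ij}$ while, by axiom (4), leaving all previously placed tiles in place. Hence I may define $\mathrm{cfg}(q)$ to be the configuration carrying tile type $t_m$ at $(i,j)$ whenever $t^m_{ij} \in l(q)$, and set $f(q) = \mathrm{cfg}(q)$.

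Next I would verify that $f$ is a transition-preserving bijection onto $M_{\mathcal{T},n}$, arguing by induction on the number of tiles added since $\sigma$. The base case is the seed axiom: $\mathrm{cfg}(s_N) = \sigma$, which is exactly the initial state of $M_{\mathcal{T},n}$. For the inductive step I use axioms (5) in both directions. Any transition $q \to q'$ of $N$ changes the label at exactly one previously empty location to a tile type dictated by a binding rule, so $\mathrm{cfg}(q')$ arises from $\mathrm{cfg}(q)$ by one legal tile addition and $f(q) \to f(q')$ is an edge of $M_{\mathcal{T},n}$; conversely, whenever a tile may legally bind to $\mathrm{cfg}(q)$, the corresponding binding axiom forces $N$ to carry a matching transition out of $q$. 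Together these show that $f$ maps the reachable part of $N$ onto the reachable configurations, which are precisely the states of $M_{\mathcal{T},n}$, with edges corresponding in both directions. Label preservation is immediate, since $l(q)$ and the labeling of $f(q)$ are read off from the same configuration.

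The step I expect to be the main obstacle is \emph{injectivity} of $f$, that is, that two distinct reachable states of $N$ cannot carry the same label. This cannot follow from the CTL axioms alone, since CTL is invariant under bisimulation and the finite tree unraveling of the acyclic system $M_{\mathcal{T},n}$ is a non-isomorphic model with the same atomic propositions satisfying every axiom. I would therefore make explicit the standard convention that the states of a transition system obtained from a propositional specification are identified with their label valuations (equivalently, that $l$ is injective), which is exactly the setting in which the model checking of Theorem~\ref{theorem:modelchecking} is performed; under this convention $\mathrm{cfg}$ is injective because distinct configurations induce distinct valuations, and the labeling of $M_{\mathcal{T},n}$ is injective for the same reason. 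With injectivity in hand, $f$ is a label- and transition-preserving bijection carrying $s_N$ to the initial state of $M_{\mathcal{T},n}$, hence an isomorphism of pointed transition systems, as required.
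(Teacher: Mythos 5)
Your proof follows essentially the same route as the paper's: satisfaction of the seed formula fixes a correspondence of atomic propositions, each reachable state's labeling is read off as a configuration, and the binding axioms force the transition structure to match that of $M_{\mathcal{T},n}$ in both directions. The one place you go beyond the paper is the injectivity step, and you are right to flag it: since CTL is bisimulation-invariant, the tree unraveling of $M_{\mathcal{T},n}$ satisfies every axiom over the same atomic propositions yet is not isomorphic to it, so the lemma genuinely needs the convention you state (states identified with their valuations, i.e.\ an injective labeling) --- a hypothesis the paper's own proof uses silently.
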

\begin{proof}
Since CTL$(\mathcal{T},n)$ requires that every state of any model must satisfy $\psi_{\sigma,n}$, the initial state $s_N$ of $N$ must satisfy it.  Satisfaction of $\psi_{\sigma,n}$ induces a bijection between the atomic propositions appearing in the labeling function of $N$ and the atomic propositions of CTL$(\mathcal{T},n)$.  If we consider the connected component of $N$, rooted at $s_N$, states can only be connected by the transition relation of $N$ if they adhere to the transition rules permitted by CTL$(\mathcal{T},n)$.  By tracing the bijection from the atomic propositions of $N$ to those of CTL$(\mathcal{T},n)$ to those of $M_{\mathcal{T},n}$, we can construct an isomorphism as in the statement of the lemma.
\end{proof}
Given these two lemmas, we can prove our main result: the question of unique terminal assembly is amenable to model checking.
\setcounter{theorem}{0}
\begin{theorem}
There exists an efficient procedure to interpret a tile assembly system $\mathcal{T}$ and a surface $n \times n$ within the temporal logic CTL$(\mathcal{T},n)$.  In particular, the question ``Does $\mathcal{T}$ have a unique terminal assembly that fits in the $n \times n$ surface?'' can be reduced to the problem of model checking on CTL$(\mathcal{T},n)$.
\end{theorem}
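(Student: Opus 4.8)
The plan is to prove the two halves of the statement separately. The first half, that there is an \emph{efficient} procedure producing CTL$(\mathcal{T},n)$ from $\mathcal{T}$ and $n$, is a bookkeeping argument. The second half, the reduction of unique terminal assembly to a model-checking query, will rest on the two lemmas above together with the monotonic (``tiles never fall off'') structure of self-assembly on a finite surface.

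For the efficiency half, I would tally the pieces of CTL$(\mathcal{T},n)$ directly from its definition: there are exactly $(k+1)n^2$ atomic propositions; axioms of types (2), (3), and (4) contribute a constant number of formulas per location; and the binding axioms of type (5) contribute one formula per entry of the binding function per location. Since the binding function has at most polynomially many entries in $k$, the entire description has length polynomial in $k$ and $n$, and can be emitted in time polynomial in $k$ and $n$. This settles the first sentence of the theorem.

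For the reduction, I would first invoke the two preceding lemmas: $M_{\mathcal{T},n}$ satisfies CTL$(\mathcal{T},n)$, and any pointed transition system satisfying CTL$(\mathcal{T},n)$ with the same number of atomic propositions is, on its connected component, isomorphic to $M_{\mathcal{T},n}$. Consequently, checking a CTL$(\mathcal{T},n)$ formula is exactly evaluating it on the canonical model, and it suffices to characterize unique terminal assembly inside $M_{\mathcal{T},n}$. Here the key structural fact is axiom (4): because no tile is ever removed, the configuration grows monotonically along every run, and since the surface is finite every run is finite and halts at a successor-free (terminal) state in at most $n^2$ steps. I would then produce a candidate terminal assembly $S$ by simulating one assembly sequence to completion and confirming, in $\mathcal{O}(k n^2)$ time, that no tile can be legally added to it. The reduction is completed by the claim that $\mathcal{T}$ has a unique terminal assembly (necessarily $S$) if and only if $M_{\mathcal{T},n} \models \textsf{AF}\,\psi_{S,n}$.

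Both directions of this equivalence use that $\psi_{S,n}$ pins down the occupancy of \emph{every} location, empty cells included via the $t^0_{xy}$ propositions. If $S$ is the unique terminal assembly, then since every run halts at a terminal state and $S$ is the only one, every run reaches $S$ and thus satisfies $\textsf{AF}\,\psi_{S,n}$. Conversely, suppose $\textsf{AF}\,\psi_{S,n}$ holds but some terminal assembly $S' \neq S$ exists; the run halting at $S'$ must satisfy $\psi_{S,n}$ at some position, forcing an intermediate configuration equal to $S$, whence $S \subseteq S'$, but then the terminality (maximality) of $S$ forces $S = S'$, a contradiction. The step I expect to require the most care is precisely this use of terminality: $\textsf{AF}\,\psi_{S,n}$ guarantees \emph{reachability} of $S$ on every run, and it is only because $S$ is maximal---so that no run can pass through the full $S$-configuration and keep growing---that reachability upgrades to the uniqueness of the terminal assembly. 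This is exactly why the candidate $S$ must be verified terminal before the model-checking query is issued.
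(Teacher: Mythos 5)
Your proposal is correct and its core is the same reduction the paper uses: verify a candidate assembly $S$ is terminal, then model-check $\textsf{AF}(\psi_{S,n})$, with a counterexample witnessing a second terminal assembly. The differences are in the surrounding scaffolding. The paper splits into three cases according to what is known about $S$ in advance --- $S$ known terminal, $S$ given but not known terminal, and no candidate at all --- and in the last case it remarks that the existence question is NP-complete in general and resorts to a perimeter check ($4n-4$ comparisons for unattached strength-two bonds) plus a log-factor enlargement of the surface to guard against assemblies that would keep growing off the $n \times n$ board. You instead collapse everything into one procedure: simulate a single run to completion (which must halt within $n^2$ steps by monotonicity), certify terminality of the result, and then issue the query. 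That is cleaner for the theorem as literally stated, where $n$ is fixed, though it silently drops the paper's concern about terminality relative to a larger surface. On the other side, you supply two things the paper leaves implicit: an explicit polynomial accounting of the size of CTL$(\mathcal{T},n)$ to justify the word ``efficient,'' and a careful two-directional argument that $\textsf{AF}(\psi_{S,n})$ is equivalent to uniqueness --- in particular the observation that $\psi_{S,n}$ constrains empty cells via the $t^0_{xy}$ propositions, so a state satisfying it is exactly $S$ and not a proper superset, which is what lets maximality of $S$ close the converse direction. The paper's proof asserts this equivalence in one sentence; your version is the argument one would actually need to check it.
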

\begin{proof}
With $\textrm{CTL}(\mathcal{T},n)$ defined as above, we can express the notion of a terminal assembly as
\begin{displaymath}
\phi = \bigwedge_{i,j \in \{0,\ldots,n-1\}} t^0_{ij} \rightarrow \textrm{\textsf{AG}}(t^0_{ij}).
\end{displaymath}
The formula $\phi$ is only true in states (configurations) to which no more tiles can be legally added (and it is true in all such states).

Now, suppose we want to determine if tile assembly system $\mathcal{T}$ has unique terminal assembly $S$, where $S$ is a finite structure.  First, let us assume we know $S$ is terminal.  Then we choose $n$ so it is minimal such that $S$ is contained in the surface $n \times n$, we plug $\textrm{CTL}(\mathcal{T},n)$ into a model checker, and we ask whether $\textrm{\textsf{AF}}(\psi_{S,n})$ is true.  A counterexample would indicate an additional terminal assembly.

Second, let us suppose that we do not know \emph{a priori} whether $S$ is terminal.  (Perhaps $S$ is large, and was not analyzed exhaustively.)  We choose $n$ as before, and first ask whether $S$ is a state that satisfies $\phi$ in $\textrm{CTL}(\mathcal{T},n)$.  If $S$ is not terminal, we stop.  Otherwise, we proceed as above, and solve the model checking problem for $\textrm{\textsf{AF}}(\psi_{S,n})$.  The process allows for formal verification that $S$ is the unique terminal assembly of $\mathcal{T}$.

Finally, suppose we do not start with a candidate $S$, but simply wish to know whether some finite unique terminal assembly exists.  This question is NP-complete in general~\cite{complexities}, but as long as we can approximate the size of a terminal assembly within a log-factor, we can answer it and still remain polynomially close to the running time of the first surface we consider.  In specific, we choose an $n$, and find the set of terminal assemblies for $\mathcal{T}$ on that $n \times n$ surface.  By examining the perimeter of that assembly, we know with $4n-4$ comparisons whether the assembly reached is in fact a terminal assembly even on a larger surface.  (If there are unattached strength-two bonds on the corners of the surface, then the assembly can continue to grow; otherwise, it cannot.)  We then increase the size of the surface by up to a log-factor, as that will only increase the maximum worst-case number of legal tile configurations by a polynomial factor.  Assuming our initial estimate of $n$ was sufficiently close, we will find the terminal assembly for the structure, in time polynomial in the number of configurations, using standard CTL model-checking algorithms.
\end{proof}
\section{Proof of Proposition~\ref{proposition:numconfigs}}
\setcounter{proposition}{0}
\begin{proposition}
Let $\mathcal{T}$ be a locally deterministic rectilinear tile assembly system with $|\sigma|=1$, and $n \geq 1$ an integer.  Then the (worst-case) number of legal configurations $\mathcal{T}$ can build on an $n \times n$ surface, if $\sigma$ is placed at location $(0,0)$ is given by:
\begin{displaymath}
\frac{2(2n-1)!}{n!(n-1)!} - 1.
\end{displaymath}
A simple example of a TAS with this worst-case behavior is Winfree's seven-tile TAS that produces the discrete Sierpinski Triangle through an XOR calculation.
\end{proposition}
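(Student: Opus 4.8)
The plan is to show that, for a locally deterministic rectilinear TAS with $|\sigma|=1$ rooted at $(0,0)$, the legal configurations on the $n \times n$ surface are in bijection with the nonempty order ideals (down-closed sets) of the grid poset $P_n = \{0,\ldots,n-1\} \times \{0,\ldots,n-1\}$ under the coordinatewise order, and then to count those ideals. First I would establish this structural characterization. Rectilinearity forces all growth to proceed north and east at the minimum binding strength, so an interior tile at $(i,j)$ with $i,j>0$ can bind only once both its south neighbor $(i,j-1)$ and its west neighbor $(i-1,j)$ are present (each supplying a strength-$1$ bond), while a tile in the bottom row or left column binds to its single west, respectively south, predecessor by a strength-$2$ bond. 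Consequently a set $C$ of filled positions is reachable only if it is down-closed: whenever $(i,j) \in C$ and $(i',j') \leq (i,j)$, then $(i',j') \in C$. Local determinism guarantees that the tile type at each location is uniquely determined, so a configuration is specified completely by its set of filled positions. Thus legal configurations correspond exactly to down-closed sets, and since every nonempty such set contains the seed at $(0,0)$, to nonempty order ideals of $P_n$.

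Next I would argue that the full $n \times n$ square is the worst case. The terminal assembly $D$ is itself a maximal order ideal, and every nonempty order ideal contained in $D$ is reachable: take any linear extension starting at $(0,0)$ and add tiles in that order, each step being legal because down-closure supplies the required south and west neighbors. Hence the number of legal configurations equals the number of nonempty order ideals of $D$, which is maximized precisely when $D = P_n$ fills the entire surface. In that case the legal configurations biject with \emph{all} nonempty order ideals of $P_n$.

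It then remains to count the order ideals of $P_n$. The boundary between the filled and empty regions of an order ideal is a monotone staircase, so order ideals of $P_n$ are in bijection with monotone lattice paths from one corner of the grid to the opposite corner using $n$ east and $n$ north steps; there are $\binom{2n}{n}$ of these. Removing the empty ideal leaves $\binom{2n}{n}-1$ nonempty order ideals. Finally I would record the elementary identity $\binom{2n}{n} = 2\binom{2n-1}{n} = \frac{2(2n-1)!}{n!\,(n-1)!}$, which rewrites the count as $\frac{2(2n-1)!}{n!\,(n-1)!} - 1$, matching the statement. Winfree's seven-tile Sierpinski system places a tile at every location of the first quadrant and is locally deterministic and rectilinear, so its terminal assembly is the full square and it realizes this worst case.

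The main obstacle is the structural characterization of the first paragraph: one must verify carefully that rectilinearity together with local determinism forces exactly the down-closed sets, no more and no fewer, as the reachable configurations, and in particular that local determinism eliminates the ambiguity of which tile occupies a position, so that counting configurations reduces cleanly to counting filled-position sets. Once this correspondence is pinned down, the reduction to the full square and the lattice-path count are routine.
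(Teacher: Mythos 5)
Your proposal is correct and arrives at the right count, but it takes a genuinely different route from the paper. You set up a bijection between reachable configurations and nonempty order ideals of the grid poset $\{0,\ldots,n-1\}^2$, identify each ideal with its monotone staircase boundary, count the $\binom{2n}{n}$ lattice paths directly, and finish with the identity $\binom{2n}{n}=\frac{2(2n-1)!}{n!\,(n-1)!}$. The paper instead builds a decorated directed graph: a root for the seed, two chains for pure-north and pure-east growth, and merged nodes whose decorations record how many configurations share a given ``depth,'' observing that the decorations reproduce a diamond-shaped region of Pascal's triangle; it then obtains the closed form by citing the corresponding OEIS entry rather than deriving it. Your argument is more self-contained and transparent --- the lattice-path bijection replaces both the graph construction and the external citation, and the identity $\sum_{0\le i,j\le n-1}\binom{i+j}{i}=\binom{2n}{n}-1$ implicit in the paper's diamond-summation becomes unnecessary. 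Both proofs, however, rest on the same informal structural premise, which you correctly flag as the main obstacle: that the worst case is the fully cooperative, full-square situation in which a configuration is determined entirely by its staircase of filled south/west positions (the paper phrases this as ``it is enough to choose how many tiles to place at the west or south edges''). Neither your write-up nor the paper's rules out, say, a rectilinear TAS in which some interior tiles attach by a single strength-two bond, which would break down-closure of the filled set and would need to be excluded (or shown not to increase the count) for the ``worst-case'' claim to be airtight; so on that point you are no less rigorous than the original, but also no more.
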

\begin{proof}
The worst-case situation is one in which the possible configurations fill the entire $n \times n$ surface, so we will assume $\mathcal{T}$ has this property.  The seven-tile Sierpinski Triangle TAS has this property.  We define a directed graph, and decorate each node with a number of tile configurations reachable from the seed assembly of $\mathcal{T}$, as follows.

Start by creating a root node, and decorate it with 1.  (This represents the seed tile placed at $(0,0)$.)  Attach two $n-1$ length chains to the root, and decorate each node in each chain with a 1.  (This represents the choices of just adding a tile to the north along the west side of the assembly, or of just adding a tile to the east along the south side of the assembly.)

Now repeat the following until all configurations are exhausted:

For each pair of nodes $(a,b)$ equidistant from the root, create node $c$, and create edges $a \rightarrow c$ and $b \rightarrow c$.  Let $A$ be the set of configurations counted at $A$, and let $B$ be the set of configurations counted at $B$.  For each pair of configurations $\alpha \in A$ and $\beta \in B$, let $\gamma = \textrm{dom } \alpha \cup \textrm{dom } \beta$ be counted at $c$.  Let $\Gamma$ be the set of such $\gamma$.  Further let $\Gamma ' = \{\gamma ' \mid \gamma '$ is achievable from some $\gamma$ counted at $c$, without adding any northern tiles to the west side of $ \gamma$, or adding any eastern tiles to the south side of $\gamma\}$.  Let $\Gamma \cup \Gamma'$ be the set of configurations counted at $c$, and decorate $c$ with $|\Gamma \cup \Gamma'|$.

Note that each configuration is counted exactly once, due to the rectilinearity and local determinism of $\mathcal{T}$: it is enough to choose how many tiles to place at the west or south edges to uniquely determine the configurations that must appear as a result.  Also note that this graph with decorations, as $n$ grows, is exactly an increasingly bigger diamond-shaped part of Pascal's Triangle, and the total number of possible configurations is the sum of the decorations of all nodes of the diamond.  (Figures~\ref{figure:numconfigs2x2} and~\ref{figure:numconfigs3x3} show the cases where $n=2$ and $n=3$, which may provide some intuition.)  That is a known counting problem, and already part of the Online Encyclopedia of Integer Sequences.  (See Ralf Stephan's comment in~\cite{sequence}.)  We obtain the formula from this resource.
\end{proof}
\begin{figure}
\begin{center}
\includegraphics[height=3in]{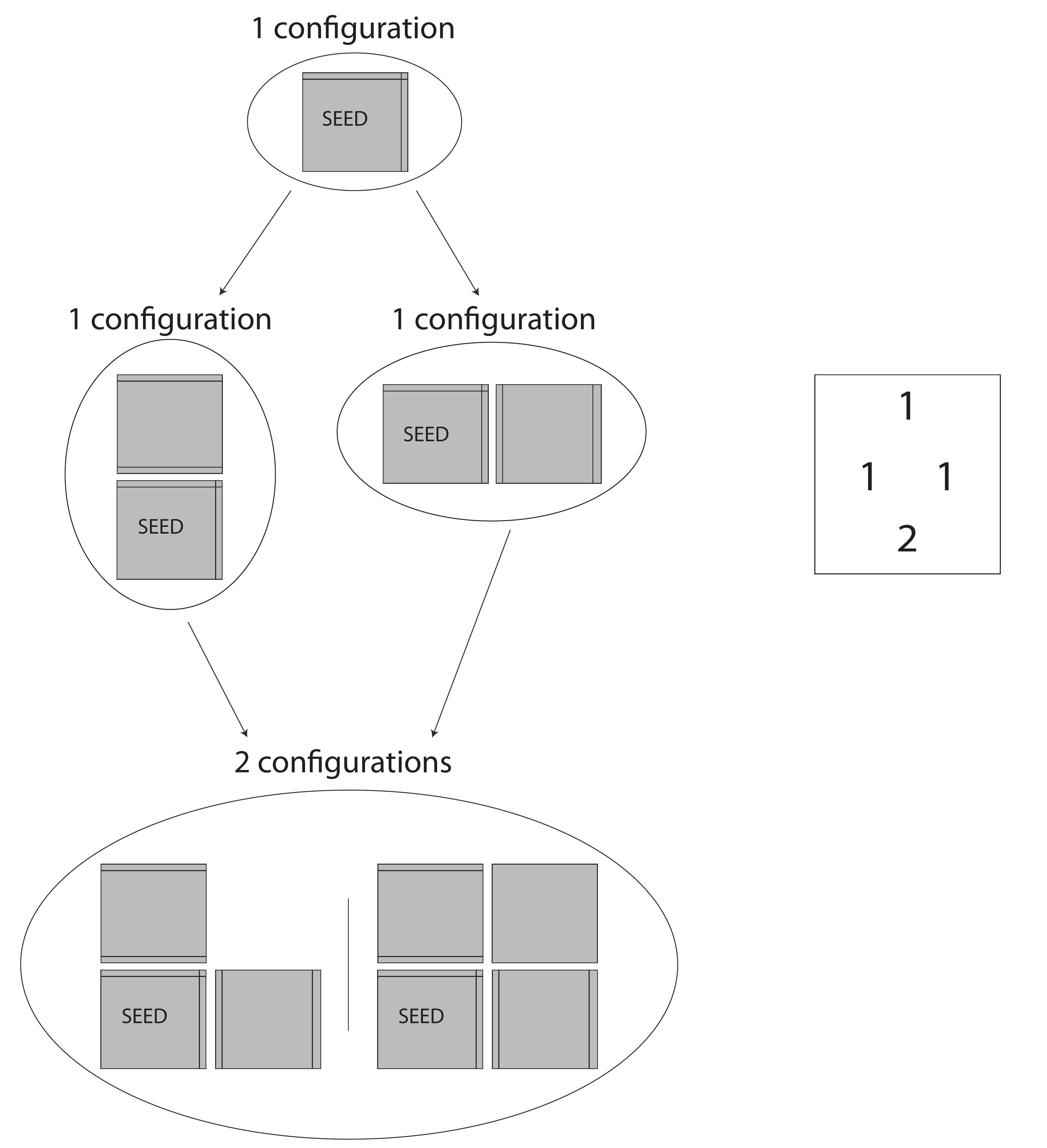}
\caption{\label{figure:numconfigs2x2}Counting the number of unique configurations for a locally deterministic, rectilinear tile assembly system with seed at the origin on a $2 \times 2$ surface; the enumeration is isomorphic to the minimal Pascal triangle diamond.}
\end{center}
\end{figure}
\begin{figure}
\begin{center}
\includegraphics[height=7in]{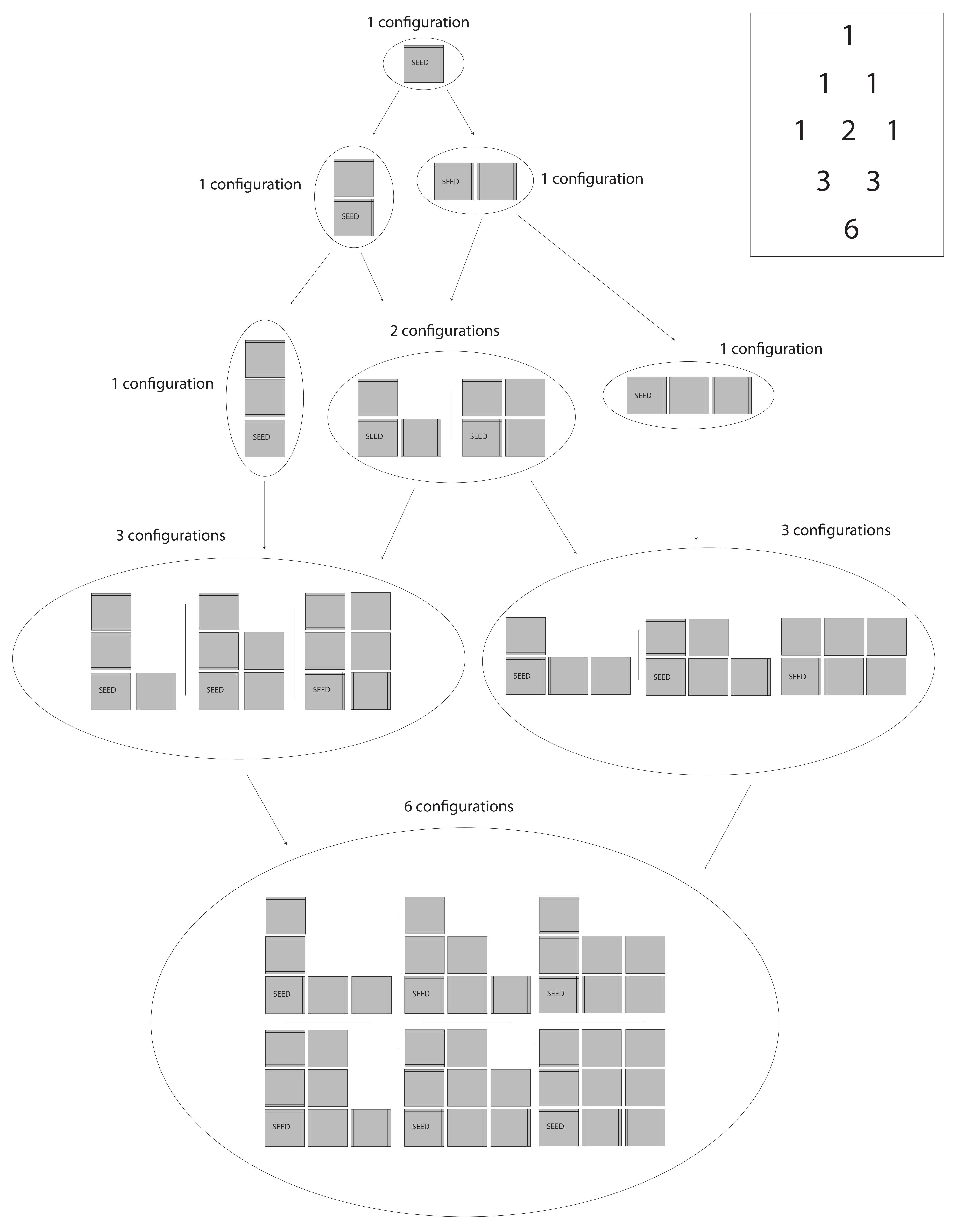}
\caption{\label{figure:numconfigs3x3}Counting the number of unique configurations for a locally deterministic, rectilinear tile assembly system with seed at the origin on a $3 \times 3$ surface; the enumeration is isomorphic to the second-smallest Pascal triangle diamond.}
\end{center}
\end{figure}
\section{Proof of Theorem 3}
\setcounter{theorem}{2}
\begin{theorem}
There exists a polynomial-time algorithm $A$ that does the following.  Given an input TAS $\mathcal{T}$ and a surface size $n$, $A$ either produces a legal assembly sequence of $\mathcal{T}$ that demonstrates $\mathcal{T}$ is not rectilinear, or $A$ correctly asserts that $\mathcal{T}$ has a unique terminal assembly, or $A$ produces two legal assembly sequences of $\mathcal{T}$ that will result in distinct terminal assemblies.  Further, $A$ only needs to evaluate $\mathcal{O}(n^2)$ configurations of $\mathcal{T}$.
\end{theorem}
\begin{proof}
Since the question, ``Does $\mathcal{T}$ have a unique terminal assembly?'' reduces to a model checking problem, and since efficient model checking algorithms return either ``yes'' or an execution trace that functions as a counterexample, our task simplifies to producing (1) a test for rectilinearity, and (2) a partial order reduction so the search space for the algorithm $A$ is only $\mathcal{O}(n^2)$.  The test for rectilinearity is straightforward: each time we place a tile to build a larger configuration, we check whether the tile has any unattached strength-two bonds.  If a tile has an unattached strength-two bond to the north, it must be at the western edge of the configuration.  Similarly, if a tile has an unattached strength-two bond to the west, it must be at the southern edge of the configuration.  Otherwise, if we encounter an unattached strength-two bond in any other situation, it is a counterexample to the rectilinearity of the input tileset.

The partial order reductions rely on the following symmetries: if a tile binds at location $(x,y)$, its binding is independent of any tiles at $(i,j)$ for $i<x$ and $j>y$, and independent of any tiles at $(k,l)$ for $k>x$ and $l<y$.  This allows us to limit our search of the configuration space as shown in Figure~\ref{figure:searchspace}.  If the input TAS is both rectilinear and uniquely determined, there are four distinct configurations in each of the three-location shaded regions of Figure~\ref{figure:searchspace}.  For the locations marked with a ``1'', we can simply determine a priority order, for example, placing all tiles in the $x=0$ column before moving to the $x=1$ column.  Hence, for an $n \times n$ surface, we can calculate the number of configurations we need to search as
\begin{align*}
\#(\textrm{configs in shaded regions}) + \#(\textrm{locations marked with ``1''}) &= \left[4(n-1)\right] + \left[2 \sum_{i=1}^{n-2} i +1\right]\\
&=\left[4(n-1)\right] + \left[(n-1)(n-2) + 1\right] \\
&=(n-1)(n+2) + 1 \\
&= n^2 +n -1. \\
\end{align*}
If we find at any location that more than one tile can be placed there, we have a counterexample to uniqueness of terminal assembly.  Otherwise, we need only check $\mathcal{O}(n^2)$ configurations to verify that the input tileset is both rectilinear and has a unique terminal assembly.
\begin{figure}
\centering
\caption{\label{figure:searchspace}Search space schematic for formal verification of a rectilinear tile assembly system. The locations marked with ``1'' only need to be checked one time (does a unique tile bind here, or can multiple tiles bind here?), because the behavior at those locations is completely determined by the tiles that have already been placed to the south and west.  In the three-location shaded regions, however, all legal configurations need to be tested, to confirm there is no violation of rectilinearity (\emph{e.g.}, two tiles causing a third to bind to the south).}
\includegraphics[height=2.5in]{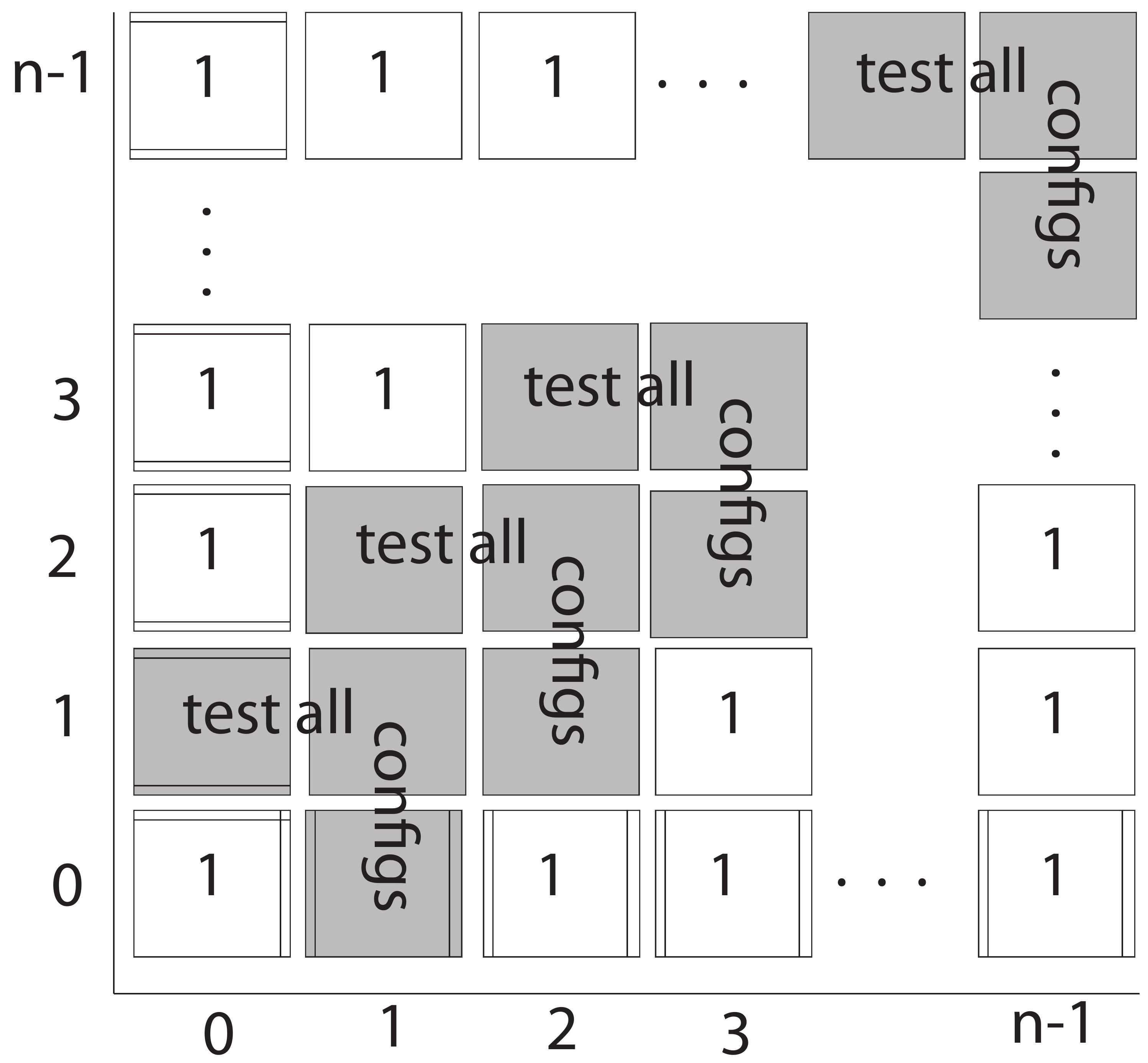}
\end{figure}
\end{proof}
\section{Translation of ISU TAS files into SMART language}
The ISU TAS simulator uses two files to define a tile assembly system: a file that describes each tile type in a text format (see Figure~\ref{figure:ISUTASexample}), and a file that defines the seed assembly of the TAS by listing tilenames and locations.  We translate this TAS, and its implicitly defined behavior, into the SMART language by declaring a Petri net as shown in Figure~\ref{figure:SMARTcode}.
\begin{figure}
\begin{verbatim}
TILENAME 1+1
LABEL
NORTHBIND 1
EASTBIND 1
SOUTHBIND 1
WESTBIND 1
NORTHLABEL 0
EASTLABEL 0
SOUTHLABEL 1
WESTLABEL 1
CREATE
\end{verbatim}
\caption{\label{figure:ISUTASexample}A tile type in ISU TAS file format for the Sierpinski Triangle TAS.  It provides a name for each tile, and tile edge; and sets the bond strength of each tile edge.}
\end{figure}
\begin{figure}
\begin{verbatim}
pn SierpTri := {

// the locations of the Petri net correspond to the presence (or absence) of a tile from a specific location
// first the possibility that locations are empty
for (int i in {0..49}) {
  for (int j in {0..49}) {
    place empty[i][j];
}}

// now the possibility that locations have tiles
for (int k in {0..6}) {
  for (int i in {0..49}) {
    for (int j in {0..49}) {
      place tile[k][i][j];
}}}

// the transitions of the Petri net correspond to all potential bonds that may be formed
for (int k in {0..6}) {
  for (int i in {0..49}) {
    for (int j in {0..49}) {
      trans bond[k][i][j];
}}}

// initialization command translating the tiles of the seed assembly
// to an initial configuration of tokens in the Petri net
init(tile[0][0][0]:1);
init(empty[0][1]:1, empty[0][2]:1, ... // continues for all 50 x 50 locations

// this section produces the arcs/transitions for the Petri net
// first produce (unguarded) transitions from empty location (x,y) to each possible tile at (x,y)
for (int k in {0..6}) {
  for (int i in {0..49}) {
    for (int j in {0..49}) {
      arcs(empty[i][j]:bond[k][i][j], bond[k][i][j]:tile[k][i][j]);
}}}
// now produce guards that activate the bond transition only if the binding rule is true
// first a loop that takes care of all non-boundary conditions
for (int i in {1..48}) {
  for (int j in {1..48}) {
    guard(bond[0][i][j]:(tk(tile[5][i][j+1]) > 0)|(tk(tile[6][i+1][j]) > 0));
        // continues for all guards at all locations

// the following commands generate statesets and related expressions for use by the model checking program
bigint numStates := card(reachable);
stateset nonTerminalStates := EX(potential(true));
stateset terminalStates := reachable \ nonTerminalStates;
bigint numTerminalStates := card(terminalStates);
};

// this is the model checking program based on the Petri net defined above
print("Number of reachable states for this tile assembly system: ", SierpTri.numStates);
print("Number of terminal assemblies reachable from the seed assembly: ", SierpTri.numTerminalStates);
\end{verbatim}
\caption{\label{figure:SMARTcode}Code fragment that defines a Petri net in the SMART language to verify the unique terminal assembly of the Sierpinski Triangle TAS on a $50 \times 50$ surface.  The TAS has seven tile types (which $k$ ranges over in the loops).  \texttt{empty[i][j]} is the boolean ``No tile is at $(i,j)$,'' while \texttt{tile[k][i][j]} is the boolean ``Tile type $k$ is at $(i,j)$.''  If the transition \texttt{bond[k][i][j]} is enabled, it is legal for tile type $k$ to bind at location $(i,j)$.  The \texttt{guard} commands determine whether to enable the transitions.}
\end{figure}
\begin{table}
\begin{center}
\caption{\label{table:experiments-time}Experimental results: verification times.}
\begin{tabular}{lc|c|r}
Name of TAS & No. of Tile Types & Surface Size & Verification Time \\ \hline
Sierpinski Triangle & 7 &
$50 \times 50$ & 18 seconds \\
&& $75 \times 75$ & 91 seconds \\
&& $125 \times 125$ & 677 sec (11.3 min) \\
&& $150 \times 150$ & 1316 sec (21.9 min) \\
&& $175 \times 175$ & 2419 sec (40.3 min) \\
&& $200 \times 200$ & 4079 sec (68.0 min) \\
&& $250 \times 250$ & 9542 sec (2.7 hrs) \\
&& $316 \times 316$ & $>6$ hrs \\ \hline
Sierpinski Carpet & 30 &
$20 \times 20$ & 7 seconds \\
&& $32 \times 32$ & 29 seconds \\
&& $50 \times 50$ & 125 seconds \\
&& $60 \times 60$ & 237 sec (3.9 min) \\
&& $75 \times 75$ & 548 sec (9.1 min) \\
&& $90 \times 90$ & 1001 sec (16.7 min) \\
&& $100 \times 100$ & 1463 sec (24.4 min) \\
&& $110 \times 110$ & 2034 sec (33.9 min) \\
&& $115 \times 115$ & 2608 sec (43.5 min) \\
&& $120 \times 120$ & 2961 sec (49.3 min) \\
&& $125 \times 125$ & 3345 sec (58.7 min) \\
&& $150 \times 150$ & 13637 sec (3.8 hrs) \\ \hline
Sierpinski Carpet Variant & 128 &
$10 \times 10$ & 9 seconds \\
&& $20 \times 20$ & 59 seconds \\
&& $32 \times 32$ & 264 sec (4.4 min) \\
&& $40 \times 40$ & 513 sec (8.6 min) \\
&& $50 \times 50$ & 1005 sec (16.7 min) \\
&& $55 \times 55$ & 1413 sec (23.6 min) \\
&& $60 \times 60$ & $>3$ hrs \\ \hline
Fibered Sierpinski Triangle & 333 &
$10 \times 10$ & 7 seconds \\
&& $15 \times 15$ & 53 seconds \\
&& $20 \times 20$ & 230 sec (3.8 min) \\
&& $25 \times 25$ & 433 sec (7.2 min) \\
&& $30 \times 30$ & 668 sec (11.1 min) \\
&& $35 \times 35$ & 2051 sec (34.2 min) \\
&& $40 \times 40$ & 2912 sec (48.5 min) \\ \hline
\end{tabular}
\end{center}
\end{table}
\begin{table}
\begin{center}
\caption{\label{table:experiments-memory}Experimental results: memory used to define binding rules, the transition system state space, and the reachability graph (edges of the transition system graph).}
\begin{tabular}{lc||c|c|c|r}
Name of TAS & Tile Types & Surface Size & Memory to define rules & State space storage & Reachability graph storage \\ \hline
Sierpinski Triangle & 7 &
$50 \times 50$ & 0.25gb & 5.6mb & 39kb \\
&& $75 \times 75$ & 0.57gb & 28.3mb & 88kb \\
&& $125 \times 125$ & 1.6gb & 218mb & 244kb\\
&& $150 \times 150$ & 2.3gb & 452mb & 528kb \\
&& $175 \times 175$ & 3.1gb & 838mb & 478kb \\
&& $200 \times 200$ & 4.0gb & 1.4gb & 625kb \\
&& $250 \times 250$ & 6.2gb & 3.4gb & 976kb \\
&& $316 \times 316$ & 9.9gb & ? & ? \\ \hline
Sierpinski Carpet & 30 &
$20 \times 20$ & 0.51gb & 303kb & 6kb \\
&& $32 \times 32$ & 1.3gb & 2mb & 16kb \\
&& $75 \times 75$ & 7.5gb & 89mb & 88kb \\
&& $100 \times 100$ & 13.2gb & 283mb & 156kb \\
&& $110 \times 110$ & 16.0gb & 415mb & 189kb \\
&& $120 \times 120$ & 19.1gb & 588mb & 225kb \\
&& $150 \times 150$ & 29.9gb & 1.4gb & 352kb \\
&& $200 \times 200$ & $>40$gb & ? & ? \\ \hline
Sierpinski Carpet Variant & 128 &
$10 \times 10$ & 1.2gb & 19kb & 2kb \\
&& $20 \times 20$ & 5.2gb & 303kb & 6kb \\
&& $32 \times 32$ & 13.6gb & 2.9mb & 16kb \\
&& $40 \times 40$ & 21.6gb & 7.1mb & 25kb \\
&& $50 \times 50$ & 34.1g & 17.6mb & 39kb \\
&& $55 \times 55$ & 40.7gb & 25.8mb & 47kb \\ \hline
Fibered Sierpinski Triangle & 333 &
$10 \times 10$ & 0.74gb & 1kb & 112 bytes \\
&& $15 \times 15$ & 1.7gb & 22kb & 656 bytes \\
&& $20 \times 20$ & 3.1gb & 146 kb & 2kb \\
&& $25 \times 25$ & 5.0gb & 330kb & 3kb \\
&& $30 \times 30$ & 7.3gb & 575kb & 3kb \\
&& $35 \times 35$ & 9.8gb & 1.7mb & 7kb \\
&& $40 \times 40$ & 12.9gb & 2.6mb & 9kb \\ \hline
\end{tabular}
\end{center}
\end{table}
\end{document}